\documentclass[12pt,draftcls,journal,onecolumn]{IEEEtran}

\usepackage{amssymb,amsthm, amsmath,latexsym}
\usepackage{graphicx}
\usepackage{mathrsfs}
\usepackage{amsfonts}
\usepackage{amssymb}
\usepackage{longtable}
\usepackage{amsmath}
\usepackage{setspace}
\usepackage{caption}
\usepackage[figuresright]{rotating}
\usepackage[misc]{ifsym}
\usepackage{bbm}
\usepackage{makecell}
\usepackage{arydshln}
\usepackage{supertabular}
\usepackage{booktabs}
\usepackage{color}

\newtheorem{theorem}{Theorem}
\newtheorem{lemma}[theorem]{Lemma}

\newtheorem{corollary}[theorem]{Corollary}

\newtheorem{example}[theorem]{Example}

\usepackage{blindtext}

\ifCLASSINFOpdf

\else

\fi

\begin{document}
\title{MDS and AMDS  Symbol-Pair Codes from Three Families of Simple Root Cyclic Codes
\thanks{
%The work of Xiaoqiang Wang was supported by the National
%Natural Science Foundation of China (12001175).
%The work of Chengju Li was supported by the National
%Natural Science Foundation of China (12071138), Shanghai Rising-Star Program (22QA1403200), the open research fund of National Mobile Communications Research Laboratory of Southeast University (2022D05), and the Shanghai Trusted Industry Internet Software Collaborative Innovation Center.
%The work of Yansheng Wu was supported by the National Natural Science Foundation of China (12101326) and the Natural Science Foundation of Jiangsu Province (BK20210575 ).
}
}
\author{Jiawei He, Zihan Zhang, Xiaoqiang Wang$^*$, Zhao Lin}

\renewcommand{\thefootnote}{\empty}%{footnote}}
\footnotetext{\thanks{*Corresponding author. }
\newline \indent Jiawei He and Zihan Zhang are with the School of Mathematics and Information Science, Nanchang Hangkong University, Nanchang 330036, China (E-mail: hjwywh@mails.ccnu.edu.cn; zzhdyhh@163.com).
\newline \indent Xiaoqiang Wang and Zhao Lin are with the Hubei Key Laboratory of Applied Mathematics, Faculty of Mathematics and Statistics, Hubei University, Wuhan 430062, China (E-mail: waxiqq@163.com; linnzhaoo@163.com).}

\maketitle

\begin{abstract}
Symbol-pair distances of simple-root cyclic codes are important for evaluating error correction performance under overlapping pair read channels.  In this paper, we construct three families of simple-root cyclic codes over finite fields of odd characteristic via generator polynomials. For length \(2q+2\), the minimum symbol-pair distance is \(10\) for every odd prime power \(q\ge 7\), established by excluding invalid support configurations. For length \(4q-4\), when \(q\equiv 3\pmod 4\) and finite field constraints hold, it is \(9\) or \(10\). For length \(2(q-1)\), reciprocal-polynomial analysis shows it is \(11\) or \(12\). All obtained codes are MDS or AMDS symbol-pair codes.
\end{abstract}

%The question as to whether the dual code of a BCH code is still a BCH code very hard to answer in general. We call a BCH code is a dually-BCH code if its dual is also a BCH code.
% Note that keywords are not normally used for peerreview papers.
\textbf{MSC 2020} \ \ 94B05; 94B15; 11T71

\textbf{Keywords} \ \   Symbol-pair code; MDS code; AMDS code; BCH bound.

% For peer review papers, you can put extra information on the cover
% page as needed:
% \ifCLASSOPTIONpeerreview
% \begin{center} \bfseries EDICS Category: 3-BBND \end{center}
% \fi
%
% For peerreview papers, this IEEEtran command inserts a page break and
% creates the second title. It will be ignored for other modes.
\IEEEpeerreviewmaketitle

\section{Introduction}
Modern communication and storage systems often read data in blocks of consecutive symbols. In a symbol-pair read channel, the basic read unit is an overlapping pair rather than an isolated symbol, making the model natural for storage devices and communication systems in which adjacent positions are coupled. Cassuto and Blaum introduced this framework and defined the corresponding pair weight and pair distance \cite{1}.
\vskip 3mm

Let \(\mathbb{F}_q\) be a finite field and let \(\mathcal{C}\subseteq \mathbb{F}_q^n\) be a \(q\)-ary code. The symbol-pair metric is induced by the cyclic pair-read map
\(
\pi(\mathbf{x})=((x_0,x_1),(x_1,x_2),\ldots,(x_{n-1},x_0))
\)
for \(\mathbf{x}=(x_0,\ldots,x_{n-1})\). Since each coordinate appears in two consecutive pairs, the pair distance reflects both the locations of nonzero entries and their cyclic adjacency. Consequently, codes with identical Hamming parameters may have different symbol-pair distances.
\vskip 3mm

A central problem is to construct codes with the largest possible pair distance for prescribed length and dimension. Chee, Ji, Kiah, Wang, and Yin established the Singleton-type bound for symbol-pair codes and constructed infinite families attaining it \cite{9}. For a linear \([n,k]\) code,
\(
d_p(\mathcal{C})\le n-k+2.
\)
A code attaining equality is called a maximum distance separable (MDS) symbol-pair code. Analogously, a linear code with
\(
d_p(\mathcal{C})=n-k+1
\)
is called an almost maximum distance separable (AMDS) symbol-pair code. MDS symbol-pair codes provide optimal pair-error correction, whereas AMDS symbol-pair codes describe the first suboptimal layer and often exist for parameter sets where MDS constructions are impossible.
\vskip 3mm

Several algebraic constructions have been developed. Chen, Lin, and Liu obtained lower bounds for constacyclic symbol-pair codes and constructed MDS codes with pair distances seven and eight via repeated-root methods \cite{16}. Dinh et al. completely characterized the symbol-pair distances of repeated-root constacyclic codes of prime power length \(p^s\) over finite fields \cite{13}, and Dinh et al.  further investigated repeated-root cyclic symbol-pair codes of length \(2p^s\) over \(\mathbb{F}_{p^m}\) \cite{2}. In 2019,  Dinh et al. investigated repeated-root \(\lambda\)-constacyclic codes of length \(2p^s\) over \(\mathbb{F}_{p^m}\)\cite{7}. Ma and Luo constructed MDS symbol-pair codes from repeated-root cyclic codes with pair distances ten and twelve \cite{15}, and later obtained six classes of AMDS symbol-pair codes from repeated-root cyclic codes, including one unbounded length family and one family with pair distance thirteen \cite{25}. Tang and Luo derived further MDS and AMDS families from repeated-root cyclic codes and determined the MDS and AMDS cases of a length \(3p\) family when the generator degree is at most ten \cite{26}. Matrix-product methods have also produced new MDS and AMDS symbol-pair codes \cite{20}.
\vskip 3mm

Simple-root cyclic codes provide a complementary algebraic setting. When the characteristic does not divide the code length, the roots of \(x^n-1\) are distinct, and the defining zeros can be analyzed through finite-field conjugacy. In 2015, Kai et al. utilized cyclotomic cosets over finite fields to construct simple-root cyclic MDS symbol-pair codes\cite{11}. Qiu and Fang recently constructed three infinite families of MDS symbol-pair codes from simple-root cyclic codes with parameter pairs \((4q+4,7)\), \((4q-4,8)\), and \((2q+2,9)\) under suitable field conditions \cite{23}. Their arguments use finite field equations, cyclic-code decompositions, and support exclusions. These results show that simple-root cyclic codes can attain long lengths and large pair distances without relying on repeated roots. Table I summarizes known MDS symbol-pair constructions.

\begin{table}[htbp]
  \centering
  \caption{MDS and related symbol-pair code parameters in the supplied construction manuscript.}
  \footnotesize
  \begin{tabular}{cccc}
    \hline
    $d_p$ & Length $n$ & Field restrictions($q=p^t$) & References \\
    \hline
    $5$ & $5 \le n \le q^2+q+1$ & $-$ & \cite{17,11} \\
    $5$ & $\max(6,q+2) \le n \le q^2$ & $-$ & \cite{16,17} \\
    $5$ & $n=q^2+1$ & $-$ & \cite{11} \\
    $6$ & $n=p^2+p$ & $q=p>2$ & \cite{19} \\
    $6$ & $n=2p^2-2p$ & $q=p>2$ & \cite{19} \\
    $6$ & $n=3m, m\in[3,q]$ & $-$ & \cite{20} \\
    $6$ & $n=q^2+q$ & $q\ne4,5,2^t$; $t$ odd & \cite{20} \\
    $7$ & $n=3p$ & $q=p\ge5$ & \cite{16} \\
    $7$ & $n=4p$ & $q=p\ge5$ & \cite{10,21} \\
    $7$ & $n=5p$ & $q=p,5\nmid(p-1)$ & \cite{21} \\
    $7$ & $n=3mp, m\in[1,q/p]$ & $3\nmid(q-1),2\mid q$ & \cite{20} \\
    $7$ & $n=2q+2$ & $2\nmid q$ & \cite{20} \\
    $7$ & $n=2q+2$ & $p>2$ & \cite{22} \\
    $8$ & $n=4q-4$ & $4\nmid(q+1)$ & \cite{22} \\
    $8$ & $n=4q+4$ & $4\nmid(q-1)$ & \cite{23} \\
    $8$ & $n=3p$ & $q=p,3\nmid(p-1)$ & \cite{16} \\
    $8$ & $n=5p$ & $q=p,5\nmid(p-1)$ & \cite{21} \\
    $8$ & $n=2m, m\le q+2$ & $p=2$ & \cite{9} \\
    $9$ & $n=4q-4$ & $4\nmid(q+1)$ & \cite{23} \\
    $9$ & $n=2q+2$ & $p>2$ & \cite{23} \\
    $10$ & $n=3p$ & $q=p,3\nmid(p-1)$ & \cite{15} \\
    $12$ & $n=3mp, m\in[1,q/p]$ & $3\nmid(q-1)$ & \cite{20} \\
    $12$ & $n=3p$ & $q=p,3\nmid(p-1)$ & \cite{15} \\
    $l$ & $4\le l<m\le q+1$ & $-$ & \cite{9} \\
    $2l$ & $n=2m,4\le l+1\le m\le q+1$ & $-$ & \cite{9} \\
    $2l+1$ & $n=2m,l+1\le m\le q$ & $-$ & \cite{20} \\
    $2q-1$ & $n=2q+2$ & $p=2$ & \cite{22} \\
    $n-4$ & $n=2m,m\le q+2$ & $p=2$ & \cite{9} \\
    \hline
  \end{tabular}
\end{table}

In this paper, we construct three families of simple-root cyclic codes over finite fields of odd characteristic. The first family has length \(2q+2\); for every odd prime power \(q\ge 7\), its minimum symbol-pair distance is ten, established by excluding invalid support configurations. The second family has length \(4q-4\); when \(q\equiv 3\pmod 4\) and the relevant finite field constraints hold, its minimum symbol-pair distance is nine or ten. The third family has length  $2q-2$, and its minimum symbol-pair distance is 11 or 12. All resulting codes are MDS or AMDS symbol-pair codes. The proofs combine the BCH bound, parity decomposition, root evaluations, Frobenius relations, support-pattern analysis, and symbolic elimination.

\vskip 3mm
The remainder of the paper is organized as follows. Section II introduces notation and preliminary results. Section III presents the length \(2q+2\) construction and proves its distance statement. Section IV studies the length \(4q-4\) family and the associated support configurations. Section V classifies the length  $2q-2$ family. Section VI concludes the paper.

\section{Preliminaries}

This section fixes notation and recalls the symbol-pair and cyclic code facts used throughout the paper.

\subsection{Linear codes and the symbol-pair metric}

Let \(q=p^t\), where \(p\) is prime and \(t\ge1\), and let \(\mathbb F_q\) be the finite field with \(q\) elements. An \([n,k]\) linear code \(\mathcal C\) is a \(k\)-dimensional subspace of \(\mathbb F_q^n\). For \(\mathbf x=(x_0,\ldots,x_{n-1})\), its Hamming weight is
\(
\operatorname{wt}_H(\mathbf x)
=\left|\{i\in\mathbb Z_n:x_i\ne0\}\right|.
\)
The minimum Hamming distance of \(\mathcal C\) is denoted by \(d_H(\mathcal C)\).
For \(\mathbf x\in\mathbb F_q^n\), define the cyclic symbol-pair read vector by
\(
\pi(\mathbf x)
=((x_0,x_1),(x_1,x_2),\ldots,(x_{n-1},x_0)),
\)
where all subscripts are taken modulo \(n\). The symbol-pair weight of \(\mathbf x\) is
\[
\operatorname{wt}_p(\mathbf x)
=\left|\{i\in\mathbb Z_n:(x_i,x_{i+1})\ne(0,0)\}\right|.
\]
For \(\mathbf x,\mathbf y\in\mathbb F_q^n\), the symbol-pair distance is
\(
d_p(\mathbf x,\mathbf y)
=\left|\{i\in\mathbb Z_n:(x_i,x_{i+1})\ne(y_i,y_{i+1})\}\right|.
\)
For a code \(\mathcal C\), its minimum symbol-pair distance is
\(
d_p(\mathcal C)
=\min\{d_p(\mathbf x,\mathbf y):\mathbf x,\mathbf y\in\mathcal C,\ \mathbf x\ne\mathbf y\}.
\)
When \(\mathcal C\) is linear, translation invariance gives
\(
d_p(\mathcal C)
=\min\{\operatorname{wt}_p(\mathbf c):\mathbf 0\ne\mathbf c\in\mathcal C\}.
\)

\begin{lemma}\cite{1}\label{lem:HP}
For a linear \([n,k]\) code \(\mathcal C\) with \(0<d_H(\mathcal C)<n\), then
\[
d_H(\mathcal C)+1\le d_p(\mathcal C)\le 2d_H(\mathcal C).
\]
\end{lemma}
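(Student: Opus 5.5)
It suffices, by linearity, to prove \(\operatorname{wt}_H(\mathbf c)+1\le \operatorname{wt}_p(\mathbf c)\le 2\operatorname{wt}_H(\mathbf c)\) for every nonzero \(\mathbf c\) whose support is a proper subset of \(\mathbb Z_n\), and then minimize. Since \(d_p(\mathcal C)=\min\operatorname{wt}_p(\mathbf c)\) over nonzero codewords, the hypothesis \(d_H(\mathcal C)<n\) must also be used to guarantee that a minimum-Hamming-weight codeword has proper support.

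Fix \(\mathbf c\ne \mathbf 0\) and let \(S=\operatorname{supp}(\mathbf c)\), with \(w=|S|\). A pair index \(i\) satisfies \((c_i,c_{i+1})\ne(0,0)\) exactly when \(i\in S\) or \(i+1\in S\). This gives
\[
\operatorname{wt}_p(\mathbf c)=|S\cup(S-1)|.
\]

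For the upper bound, \(|S\cup(S-1)|\le |S|+|S-1|=2w\). Taking \(\mathbf c\) of Hamming weight \(d_H(\mathcal C)\) then gives \(d_p(\mathcal C)\le 2d_H(\mathcal C)\).

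For the lower bound, \(|S\cup(S-1)|\ge |S|=w\), and equality would force \(S-1=S\). Then \(S\) is invariant under the shift \(i\mapsto i-1\), whose action on \(\mathbb Z_n\) is transitive. So \(S=\emptyset\) or \(S=\mathbb Z_n\). Hence every \(\mathbf c\) with \(0<w<n\) satisfies \(\operatorname{wt}_p(\mathbf c)\ge w+1\ge d_H(\mathcal C)+1\).

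The one delicate point is a codeword of full weight \(w=n\). Such a codeword has \(\operatorname{wt}_p(\mathbf c)=n\), and \(n\ge d_H(\mathcal C)+1\) holds precisely because \(d_H(\mathcal C)<n\). With this, every nonzero codeword has pair weight at least \(d_H(\mathcal C)+1\), which gives the lower bound. The argument is entirely combinatorial and uses neither cyclicity of the code nor the field structure.
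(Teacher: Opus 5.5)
Your proof is correct and complete. The paper gives no proof of this lemma (it is quoted from Cassuto and Blaum \cite{1}), and your argument is the standard one:
\begin{itemize}
\item the pair support is $S\cup(S-1)$;
\item the upper bound is $|S\cup(S-1)|\le 2|S|$;
\item the lower bound comes from the fact that no nonempty proper subset of $\mathbb Z_n$ is shift-invariant;
\item full-support codewords are handled separately via $d_H(\mathcal C)<n$.
\end{itemize}
Equivalently, you could read both bounds off the paper's run formula $\operatorname{wt}_p(\mathbf c)=\operatorname{wt}_H(\mathbf c)+\rho(\mathbf c)$ from Section II, using $1\le\rho(\mathbf c)\le\operatorname{wt}_H(\mathbf c)$ for a nonempty proper support.

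One minor remark: the upper bound holds for every support, so the hypothesis $d_H(\mathcal C)<n$ is genuinely needed only in the full-weight case of the lower bound. The $[n,1,n]$ repetition code, with $d_p=n=d_H$, shows that this hypothesis cannot be dropped.
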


\begin{lemma}\cite{2}\label{lem:singleton}
Let \(\mathcal C\) be a \(q\)-ary code of length \(n\) and minimum symbol-pair distance \(d_p\). Then
\(
|\mathcal C|\le q^{n-d_p+2}.
\)
If \(\mathcal C\) is linear of dimension \(k\), then
\(
d_p\le n-k+2.
\)
\end{lemma}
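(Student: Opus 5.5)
The Singleton-type bound of Lemma~\ref{lem:singleton} can be proved by a puncturing/projection argument adapted to the pair metric. Fix a code \(\mathcal C\subseteq\mathbb F_q^n\) with minimum symbol-pair distance \(d_p\). We may assume \(2\le d_p\le n\); note \(d_p\ge 2\) whenever \(|\mathcal C|\ge2\) and \(n\ge 2\), since two distinct vectors differing in coordinate \(i\) have distinct read pairs at positions \(i-1\) and \(i\). Consider the projection
\[
\phi:\mathcal C\to\mathbb F_q^{\,n-d_p+2},\qquad \mathbf x\mapsto (x_0,x_1,\ldots,x_{n-d_p+1}),
\]
which keeps the first \(n-d_p+2\) consecutive coordinates. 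The claim \(|\mathcal C|\le q^{n-d_p+2}\) follows once \(\phi\) is shown to be injective.

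Suppose \(\mathbf x\ne\mathbf y\) in \(\mathcal C\) agree on coordinates \(0,1,\ldots,n-d_p+1\). Then \((x_i,x_{i+1})=(y_i,y_{i+1})\) for every \(i\in\{0,\ldots,n-d_p\}\), since both entries of each such pair lie in the kept window. That is \(n-d_p+1\) pair positions where \(\pi(\mathbf x)\) and \(\pi(\mathbf y)\) agree. Hence
\[
d_p(\mathbf x,\mathbf y)\le n-(n-d_p+1)=d_p-1,
\]
contradicting the minimality of \(d_p\). So \(\phi\) is injective and \(|\mathcal C|\le q^{n-d_p+2}\). The only real care is counting pair positions correctly: a window of \(m\) consecutive coordinates fixes \(m-1\) pairs, not \(m\), and this is exactly where the extra \(+2\) comes from compared with the Hamming Singleton bound. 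One should also check the edge case \(d_p=n+1\) does not arise; it cannot, since \(\operatorname{wt}_p\le n\).

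For the linear statement, \(|\mathcal C|=q^k\) gives \(q^k\le q^{n-d_p+2}\), hence \(k\le n-d_p+2\), i.e.\ \(d_p\le n-k+2\). As an alternative for linear codes, one can start from \(d_p\le\) the pair weight of a nonzero codeword. Choose a nonzero codeword vanishing on \(k-1\) consecutive positions, which exists by dimension counting. Its support lies in a window of \(n-k+1\) consecutive positions, so its pair weight is at most \(n-k+2\). This gives the same bound directly, and I would mention it as a check. I expect no serious obstacle; the main point is the off-by-one bookkeeping of pairs versus coordinates.
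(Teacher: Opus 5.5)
Your proof is correct. The paper gives no proof of this lemma: it is quoted from \cite{2}, and the introduction credits the bound to Chee et al.\ \cite{9}. So there is no in-paper argument to compare against. Your projection onto the $n-d_p+2$ consecutive coordinates $0,\ldots,n-d_p+1$ is the standard deletion (puncturing) argument for the pair Singleton bound.

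You handle the essential points properly:
\begin{itemize}
\item The reduction to $2\le d_p\le n$ guarantees that the window is a genuine set of coordinates of length at least $2$.
\item A window of $m$ consecutive coordinates fixes exactly $m-1$ read pairs. Hence two distinct codewords with the same image would satisfy $d_p(\mathbf x,\mathbf y)\le d_p-1$, which is impossible.
\end{itemize}

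Your alternative for linear codes is also valid. It takes a nonzero codeword vanishing on $k-1$ cyclically consecutive positions, whose support then lies in a cyclic interval of length $n-k+1$, so its pair weight is at most $n-k+2$. This is slightly more direct, but it only yields the second statement. The projection argument is what you need for the first statement, which the lemma asserts for arbitrary, not necessarily linear, codes.
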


An \([n,k,d_p]\) symbol-pair code is called \emph{MDS} if
\(
d_p=n-k+2,
\)
and \emph{AMDS} if
\(
d_p=n-k+1.
\)

\subsection{Cyclic codes and defining sets}

The cyclic shift of \(\mathbf x=(x_0,\ldots,x_{n-1})\) is
\(
\tau(\mathbf x)=(x_{n-1},x_0,\ldots,x_{n-2}).
\)
A linear code \(\mathcal C\subseteq\mathbb F_q^n\) is cyclic if \(\tau(\mathcal C)\subseteq\mathcal C\). Identify \(\mathbf c=(c_0,\ldots,c_{n-1})\) with the polynomial
\[
c(x)=c_0+c_1x+\cdots+c_{n-1}x^{n-1}.
\]
Under this identification, cyclic codes are precisely the ideals of
\[
\mathbb F_q[x]/\langle x^n-1\rangle.
\]
Hence there exists a unique monic divisor \(g(x)\mid x^n-1\) such that
\(
\mathcal C=\langle g(x)\rangle,
\)
and
\(
\dim(\mathcal C)=n-\deg g(x).
\)
The polynomial \(g(x)\) is called the generator polynomial of \(\mathcal C\).

Assume that \(\gcd(n,p)=1\), and let \(\zeta\) be a primitive \(n\)th root of unity in a suitable extension field of \(\mathbb F_q\). The defining set of \(\mathcal C\) is
\[
Z=\{i\in\mathbb Z_n:g(\zeta^i)=0\}.
\]
Then \(Z\) is a union of \(q\)-cyclotomic cosets modulo \(n\), that is, \(i\in Z\) implies \(iq\bmod n\in Z\), and
\[
g(x)=\prod_{i\in Z}(x-\zeta^i)\in\mathbb F_q[x].
\]
A cyclic code of length \(n\) over a field of characteristic \(p\) is called simple-root if \(\gcd(n,p)=1\). Otherwise, it is called repeated-root. In this paper, all cyclic codes considered are simple-root.

\begin{lemma}\cite{MS78}[BCH bound]\label{lem:BCH}
Let \(\mathcal C\) be a cyclic code of length \(n\) over \(\mathbb F_q\) with \(\gcd(n,q)=1\). If its defining set contains
\[
\{b,b+1,\ldots,b+\delta-2\}\subseteq\mathbb Z_n,
\]
then
\(
d_H(\mathcal C)\ge\delta.
\)
\end{lemma}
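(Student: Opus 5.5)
The BCH bound (Lemma \ref{lem:BCH}) is a classical result. The plan is the standard Vandermonde argument.

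\emph{Setup.} Let \(\mathbf c\in\mathcal C\) be a nonzero codeword with \(\operatorname{wt}_H(\mathbf c)=w\). Write its support as \(\{i_1,\ldots,i_w\}\), so that \(c(x)=\sum_{j=1}^{w}c_{i_j}x^{i_j}\) with every \(c_{i_j}\neq0\). Argue by contradiction and assume \(w\le\delta-1\). Because \(g(x)\mid c(x)\) and every element of the defining set is a zero of \(g\), we have \(c(\zeta^{b+s})=0\) for \(s=0,1,\ldots,\delta-2\). Keep only the first \(w\) of these equations, \(s=0,\ldots,w-1\); this is possible since \(w\le\delta-1\). They form a homogeneous linear system \(M\mathbf u=\mathbf 0\), where \(\mathbf u=(c_{i_1},\ldots,c_{i_w})^T\) and \(M_{s,j}=\zeta^{(b+s)i_j}\).

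\emph{Key step.} Factor \(\zeta^{b i_j}\) out of column \(j\). This gives
\[
\det M=\Bigl(\prod_{j=1}^{w}\zeta^{b i_j}\Bigr)\prod_{1\le j<l\le w}\bigl(\zeta^{i_l}-\zeta^{i_j}\bigr),
\]
a Vandermonde determinant in the elements \(x_j=\zeta^{i_j}\). Each factor \(\zeta^{b i_j}\) is nonzero. Since \(\zeta\) is a primitive \(n\)th root of unity and the \(i_j\) are distinct modulo \(n\), the \(\zeta^{i_j}\) are pairwise distinct, so every difference is nonzero. Here the hypothesis \(\gcd(n,q)=1\) is used: it guarantees that a primitive \(n\)th root of unity exists in an extension field. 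Hence \(\det M\neq0\), which forces \(\mathbf u=\mathbf 0\). This contradicts \(c_{i_j}\neq0\), so \(w\ge\delta\) and \(d_H(\mathcal C)\ge\delta\).

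\emph{Main obstacle.} The only delicate point is bookkeeping, not any real difficulty. One must choose exactly \(w\) consecutive equations so that the matrix is square. One must also check that consecutiveness modulo \(n\) still gives a Vandermonde structure even when the run \(b,\ldots,b+\delta-2\) wraps around \(0\). It does, because the exponents only enter through \(\zeta^{b+s}\) and \(\zeta^n=1\). Alternatively, one can cite \cite{MS78} directly, as the paper does.
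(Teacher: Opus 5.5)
Your proposal is correct. The paper gives no proof of Lemma~\ref{lem:BCH} and simply cites \cite{MS78}, where the bound is established by exactly this argument: any $w\le\delta-1$ columns of the check matrix built from $\zeta^{b},\ldots,\zeta^{b+\delta-2}$ form a column-scaled Vandermonde matrix with nonzero determinant.
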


\subsection{Cyclic runs and pair weight}

Let \(\mathbf c\in\mathbb F_q^n\setminus\{\mathbf 0\}\), and let
\[
\operatorname{supp}(\mathbf c)=\{i\in\mathbb Z_n:c_i\ne0\}.
\]
A cyclic nonzero run is a maximal cyclic interval contained in \(\operatorname{supp}(\mathbf c)\). Let \(\rho(\mathbf c)\) denote the number of such runs. If \(\operatorname{supp}(\mathbf c)\ne\mathbb Z_n\), then
\[
\operatorname{wt}_p(\mathbf c)
=\operatorname{wt}_H(\mathbf c)+\rho(\mathbf c).
\]
Indeed, every nonzero coordinate contributes the pair beginning at that coordinate, and every cyclic run contributes one additional pair entering the run from a zero coordinate. If \(\operatorname{supp}(\mathbf c)=\mathbb Z_n\), then \(\operatorname{wt}_p(\mathbf c)=n\).

The defining-root equations, base field conditions, distinctness of support positions, and the above run decomposition will be used jointly to prove lower bounds on \(d_p\) and to distinguish MDS and AMDS cases.

\section{MDS symbol-pair codes of length $2q+2$}
\label{sec:mds-pair-2q-plus-2}

Let $q\ge 7$ be an odd prime power, and put $n=2q+2$.
We construct a $q$-ary cyclic code of length $n$ and
dimension $n-8$ whose minimum symbol-pair distance is $10$.
The main ingredient is a decomposition of each codeword
into its even and odd indexed components. The roots of
the generator polynomial impose a common BCH constraint
on these two components, yielding a useful restriction
on the supports of codewords of small Hamming weight.

\subsection{The construction and its Hamming distance}

Since $q$ is odd,
$
n=2(q+1)\mid q^2-1.
$
Choose a primitive $n$th root of unity
$\zeta\in\mathbb F_{q^2}$. Define
\begin{equation}\label{eq:gen-g1}
\begin{aligned}
g(x)
=(x-\zeta^{-1})(x-\zeta^{-q})(x-1)(x+1)
(x-\zeta)(x-\zeta^q)(x-\zeta^2)(x-\zeta^{2q}),
\end{aligned}
\end{equation}
and let $$\mathcal{C}_1=\langle g(x)\rangle\subseteq \mathbb F_q[x]/(x^n-1).$$
We first verify that this definition gives a cyclic code
over $\mathbb F_q$ with the claimed dimension. Since
$\zeta$ has order $2(q+1)$, we have
\begin{equation}\label{eq:zeta-relations}
\zeta^{q+1}=-1,\qquad
\zeta^q=-\zeta^{-1},\qquad
\zeta^{-q}=-\zeta,\qquad
\zeta^{2q}=\zeta^{-2}.
\end{equation}
The roots in \eqref{eq:gen-g1} are invariant under the
$q$th power Frobenius automorphism. Thus
$g(x)\in\mathbb F_q[x]$.
Using \eqref{eq:zeta-relations}, the root set of $g$ is $\{1,-1,\zeta,-\zeta,\zeta^{-1},-\zeta^{-1}, \zeta^2,\zeta^{-2}\}.$
Their exponents modulo $n$ can be represented by
$0,\ 1,\ 2,\ q,\ q+1,\ q+2,\ 2q,\ 2q+1,$
which are distinct for $q\ge 7$. All these roots are
$n$th roots of unity. Consequently,
\[
g(x)\mid x^n-1,\qquad
\deg g=8,\qquad
\dim_{\mathbb F_q}\mathcal{C}_1=n-8.
\]

\begin{lemma}\label{lem:hamming-distance-six}
The code $\mathcal{C}_1$ has minimum Hamming distance six.
\end{lemma}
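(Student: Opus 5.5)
The plan is to prove $d_H(\mathcal C_1)\ge 6$ with the BCH bound (Lemma~\ref{lem:BCH}) and then to write down a codeword of Hamming weight exactly six using the even/odd decomposition.

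\emph{Lower bound.} By the computation preceding the lemma, the defining set of $\mathcal C_1$ with respect to $\zeta$ is
\[
Z=\{0,1,2,q,q+1,q+2,2q,2q+1\}\subseteq\mathbb Z_{2q+2}.
\]
Because $2q+2\equiv 0$, this set contains the cyclic interval $\{2q,2q+1,0,1,2\}$, which consists of five consecutive residues. Lemma~\ref{lem:BCH} applied with $b=2q$ and $\delta=6$ then gives $d_H(\mathcal C_1)\ge 6$.

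\emph{Upper bound.} Put $\beta=\zeta^2$, a primitive $(q+1)$th root of unity. Write a codeword as $c(x)=A(x^2)+xB(x^2)$. Then
\[
c(\pm\zeta^{j})=A(\beta^{j})\pm\zeta^{j}B(\beta^{j}).
\]
The roots $\pm1,\pm\zeta,\pm\zeta^{-1}$ of $g$ are therefore exactly the conditions $A=B=0$ at $1,\beta,\beta^{-1}$. The remaining roots $\zeta^{\pm2}$ give
\[
A(\beta^{\pm 2})+\zeta^{\pm2}B(\beta^{\pm 2})=0.
\]
I would take $B=0$ and choose $A(y)\in\mathbb F_q[y]/(y^{q+1}-1)$ vanishing at $\beta^{-2},\beta^{-1},1,\beta,\beta^{2}$. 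All the conditions above then hold, so $c(x)=A(x^2)\in\mathcal C_1$ and $\operatorname{wt}_H(c)=\operatorname{wt}_H(A)$.

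The set $\{-2,\dots,2\}\subseteq\mathbb Z_{q+1}$ is closed under multiplication by $q\equiv-1$. Hence it is a union of $q$-cyclotomic cosets and defines a cyclic code $\mathcal D$ over $\mathbb F_q$ of length $q+1$ and dimension $q-4\ge 3$, using $q\ge 7$. Its five defining residues are distinct because $q+1\ge 8$. For $\mathcal D$, the BCH bound gives minimum distance at least $6$, and the Singleton bound gives at most $(q+1)-(q-4)+1=6$. So $\mathcal D$ is MDS with minimum distance $6$ and contains some $A$ of weight exactly $6$. The resulting $c(x)=A(x^2)$ is a weight-six codeword of $\mathcal C_1$, which gives $d_H(\mathcal C_1)=6$.

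The only step needing care is confirming that $c(x)=A(x^2)$ really satisfies the conditions at $\zeta^{\pm 2}$: this needs $A(\beta^{\pm2})=0$, which is exactly why the defining set of $\mathcal D$ is enlarged from $\{-1,0,1\}$ to $\{-2,\dots,2\}$. One should also check that the degree of $A(x^2)$ stays below $n$, which it does since $\deg A<q+1$.
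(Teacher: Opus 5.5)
Your proof is correct. The lower bound is exactly the paper's: the five consecutive zeros $\zeta^{-2},\dots,\zeta^{2}$ and the BCH bound.

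Your upper bound takes a different route. The paper writes down the explicit multiple $f(x)=(x^4-(\zeta^2+\zeta^{-2})x^2+1)(x^{q+1}-1)$ of $g$: the zeros $\pm\zeta^{\pm1}$ lie in the first factor and $\pm1,\zeta^{\pm2}$ lie in the second. It then checks that $\zeta^2+\zeta^{-2}\in\mathbb F_q^*$, so that $f$ is an $\mathbb F_q$-polynomial with six nonzero coefficients. You instead pass through $y=x^2$ to the auxiliary length-$(q+1)$ code $\mathcal D$ with defining set $\{-2,\dots,2\}$. BCH plus Singleton then force $\mathcal D$ to be MDS with distance $6$, which gives a weight-six word without any coefficient computation.

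The two arguments produce the same kind of codeword. The paper's $f$ is $A(x^2)$ for the specific $A(y)=(y^2-(\beta+\beta^{-1})y+1)(y^{(q+1)/2}-1)\in\mathcal D$. The paper's version is explicit and self-contained, and yields the concrete support $\{0,2,4,q+1,q+3,q+5\}$. Yours is non-constructive but gives more. Because $\mathcal D$ is MDS, every $6$-subset of the even coordinates is the support of a weight-six codeword of $\mathcal C_1$, and by cyclicity the same holds for the odd coordinates.

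Your reduction also matches the parity decomposition the paper uses later in Lemma~\ref{lem:parity-components}. There, a smaller auxiliary code of length $q+1$ with zeros $\beta^{-1},1,\beta$ (in your notation) plays the analogous role.
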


\begin{proof}
The five consecutive powers
$
\zeta^{-2},\ \zeta^{-1},\ 1,\ \zeta,\ \zeta^2
$
are roots of $g$. Hence the BCH bound gives
\(
d_{\mathrm H}(\mathcal{C}_1)\ge 6.
\)
To prove the reverse inequality, by \eqref{eq:zeta-relations}, we have $(\zeta^2+\zeta^{-2})^q=\zeta^2+\zeta^{-2}$, so
$\zeta^2+\zeta^{-2}\in\mathbb F_q$. Moreover, $\zeta^2+\zeta^{-2}\ne 0$. Otherwise
$\zeta^4=-1$, and hence $\zeta^8=1$, contradicting
$\operatorname{ord}(\zeta)=n\ge 16$.
The four factors associated with
$\zeta^{-1},\zeta^{-q},\zeta,\zeta^q$ satisfy
\[
\begin{aligned}
(x-\zeta^{-1})(x-\zeta^{-q})
  (x-\zeta)(x-\zeta^q)&=(x-\zeta^{-1})(x+\zeta)
 (x-\zeta)(x+\zeta^{-1})\\
&
=x^4-(\zeta^2+\zeta^{-2})x^2+1.
\end{aligned}
\]
Also, the four distinct elements
$1,-1,\zeta^2,\zeta^{-2}$ are roots of $x^{q+1}-1$.
 Therefore,
\[
(x-1)(x+1)(x-\zeta^2)(x-\zeta^{-2})
\mid x^{q+1}-1.
\]
Consequently,
$
g(x)\mid f(x),
$
where
$f(x):=(x^4-(\zeta^2+\zeta^{-2})x^2+1)(x^{q+1}-1).
$
Expanding  $f(x)$ gives
\[
f(x)
=
x^{q+5}-(\zeta^2+\zeta^{-2})x^{q+3}+x^{q+1}-x^4+(\zeta^2+\zeta^{-2})x^2-1.
\]
Since $q\ge 7$, the six exponents
$
0,\ 2,\ 4,\ q+1,\ q+3,\ q+5
$
are distinct and smaller than $n$. All six coefficients
are nonzero. Thus  $f(x)$ represents a codeword of $\mathcal{C}_1$ with
Hamming weight six, and
\(
d_{\mathrm H}(\mathcal{C}_1)\le 6.
\)
This proves the assertion.
\end{proof}

\subsection{A structural restriction on small supports}

\begin{lemma}\label{lem:parity-components}
For  $\mathbf{c}=(c_0,\ldots,c_{n-1})\in \mathcal{C}_1$, define
\[
w_{\mathrm{even}}(\mathbf{c})
=
\bigl|\{j:0\le j\le q,\ c_{2j}\ne 0\}\bigr|
\,\,\, \text{and} \,\,\,
w_{\mathrm{odd}}(\mathbf{c})
=
\bigl|\{j:0\le j\le q,\ c_{2j+1}\ne 0\}\bigr|.
\]
 Then each of $w_{\mathrm{even}}(\mathbf{c})$ and $w_{\mathrm{odd}}(\mathbf{c})$
is either zero or at least four.
In particular, if both parity classes contain nonzero
coordinates, then $w_{\mathrm H}(c)\ge 8$.
\end{lemma}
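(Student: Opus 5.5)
Split $c(x)$ into its even- and odd-indexed parts, $c(x)=E(x^2)+xO(x^2)$, where
\[
E(y)=\sum_{j=0}^{q}c_{2j}y^j,\qquad O(y)=\sum_{j=0}^{q}c_{2j+1}y^j .
\]
Both are polynomials of degree at most $q$, viewed modulo $y^{q+1}-1$. We have $w_{\mathrm{even}}(\mathbf c)=\operatorname{wt}_H(E)$ and $w_{\mathrm{odd}}(\mathbf c)=\operatorname{wt}_H(O)$. The plan is to show that $E$ and $O$ each lie in one and the same cyclic code $\mathcal D$ of length $q+1$ whose defining set contains three consecutive powers of $\omega=\zeta^2$. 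Here $\omega$ is a primitive $(q+1)$th root of unity, and $\gcd(q+1,q)=1$. Lemma~\ref{lem:BCH} then gives that every nonzero codeword of $\mathcal D$ has weight at least $4$, which is the claim. The final sentence follows at once: if both components are nonzero, then $w_{\mathrm H}(\mathbf c)=w_{\mathrm{even}}+w_{\mathrm{odd}}\ge 8$.

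To extract the constraints, evaluate $c(x)=E(x^2)+xO(x^2)$ at root pairs $\pm\beta$ of $g$. Using the root set $\{\pm1,\pm\zeta,\pm\zeta^{-1},\zeta^{2},\zeta^{-2}\}$, the plan is as follows.
\begin{itemize}
\item From $c(1)=c(-1)=0$ we get $E(1)+O(1)=0$ and $E(1)-O(1)=0$. Since $q$ is odd, $2$ is invertible, so $E(1)=O(1)=0$.
\item From $c(\zeta)=c(-\zeta)=0$ we get $E(\omega)\pm\zeta O(\omega)=0$, hence $E(\omega)=O(\omega)=0$.
\item In the same way, $c(\pm\zeta^{-1})=0$ gives $E(\omega^{-1})=O(\omega^{-1})=0$.
\end{itemize}
Thus both $E$ and $O$ vanish at $\omega^{-1},1,\omega$. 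These are three consecutive powers of $\omega$, and $\omega$ has order $q+1\ge 8$, so they are distinct. The remaining roots $\zeta^{\pm 2}$ are not needed.

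One check remains: $E$ and $O$ must be codewords of the cyclic code over $\mathbb F_q$ of length $q+1$ with defining set $\{-1,0,1\}$ relative to $\omega$. This set is closed under multiplication by $q$, because $q\equiv -1 \pmod{q+1}$. The coefficients of $E$ and $O$ lie in $\mathbb F_q$, so Lemma~\ref{lem:BCH} applies directly and gives minimum distance at least $4$.

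The main care point is the $\pm$ pairing: it needs every root of $g$ used above to appear together with its negative. This holds by \eqref{eq:zeta-relations}, since $\zeta^{q}=-\zeta^{-1}$ and $\zeta^{-q}=-\zeta$. Otherwise the argument is routine.
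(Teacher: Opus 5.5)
Your proposal is correct and follows the paper's proof essentially step for step. You use the same decomposition $c(x)=E(x^2)+xO(x^2)$, the same root pairs $\pm1,\pm\zeta,\pm\zeta^{-1}$ to force $E$ and $O$ to vanish at $\zeta^{-2},1,\zeta^{2}$, and the BCH bound on the length-$(q+1)$ cyclic code with defining set $\{-1,0,1\}$; your $q$-closure check via $q\equiv-1\pmod{q+1}$ is the same point the paper makes with $\alpha^q=\alpha^{-1}$.
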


\begin{proof}
Put $m=q+1$ and $\alpha=\zeta^2$, so that $\alpha$ has
order $m$. Write the codeword polynomial uniquely as
$
c(x)=E(x^2)+xO(x^2),
$
where
\[
E(y)=\sum_{j=0}^{m-1}c_{2j}y^j
\,\,\, \text{and} \,\,\,
O(y)=\sum_{j=0}^{m-1}c_{2j+1}y^j.
\]

For every
$
t\in\{1,\zeta,\zeta^{-1}\},
$
both $t$ and $-t$ are roots of $g$. Thus,
\[
E(t^2)+tO(t^2)=0
\,\,\, \text{and} \,\,\,
E(t^2)-tO(t^2)=0.
\]
Since the characteristic is odd and $t\ne 0$, these
equations imply
$E(t^2)=O(t^2)=0.$
Consequently,
\[
E(\alpha^{-1})=E(1)=E(\alpha)=0
\,\,\, \text{and} \,\,\,
O(\alpha^{-1})=O(1)=O(\alpha)=0.
\]
The identity $\alpha^q=\alpha^{-1}$ shows that
\[
h(y)=(y-\alpha^{-1})(y-1)(y-\alpha)\in\mathbb F_q[y].
\]
Its three roots are distinct
$m$th roots of unity, so $h(y)\mid y^m-1$.
Both $E$ and $O$ therefore represent words in the
length $m$ cyclic code generated by $h$.

This auxiliary code has three consecutive roots
$\alpha^{-1},1,\alpha$, and hence has minimum Hamming
distance at least four by the BCH bound. It follows that
each of $E$ and $O$ is either zero or has Hamming weight
at least four. Since
$
w_{\mathrm H}(E)=w_{\mathrm{even}}(c),
$ and
$w_{\mathrm H}(O)=w_{\mathrm{odd}}(c),
$
the conclusion follows.
\end{proof}

\begin{corollary}\label{cor:small-weight-pair}
If $\mathbf{c}\in \mathcal{C}_1$ has Hamming weight six or seven, then all
its nonzero coordinates lie in one parity class.
Moreover,
\[
w_{ p}(\mathbf{c})=2w_{\mathrm H}(\mathbf{c})\ge 12.
\]
In particular, $C$ contains no codeword of Hamming
weight six with three nonzero cyclic runs.
\end{corollary}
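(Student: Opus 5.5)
The plan is to combine Lemma~\ref{lem:parity-components} with the run decomposition of the pair weight. Let $\mathbf c\in\mathcal C_1$ with $w_{\mathrm H}(\mathbf c)\in\{6,7\}$. By Lemma~\ref{lem:parity-components}, each of $w_{\mathrm{even}}(\mathbf c)$ and $w_{\mathrm{odd}}(\mathbf c)$ is either $0$ or at least $4$. If both were nonzero, then $w_{\mathrm H}(\mathbf c)=w_{\mathrm{even}}(\mathbf c)+w_{\mathrm{odd}}(\mathbf c)\ge 8$, which is impossible. Hence one of the two parity classes is empty, so $\operatorname{supp}(\mathbf c)$ lies entirely in the even indices or entirely in the odd indices.

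Next I would count the cyclic runs. Since $n=2q+2$ is even, two indices of the same parity are never cyclically adjacent: the neighbours $i\pm1 \bmod n$ of $i$ have the opposite parity, and reduction modulo an even $n$ preserves parity. Therefore every nonzero coordinate is an isolated run, and $\rho(\mathbf c)=w_{\mathrm H}(\mathbf c)$. Moreover, $\operatorname{supp}(\mathbf c)\ne\mathbb Z_n$, because $w_{\mathrm H}(\mathbf c)\le 7<n$. The formula $\operatorname{wt}_p(\mathbf c)=w_{\mathrm H}(\mathbf c)+\rho(\mathbf c)$ from the preliminaries then gives $w_p(\mathbf c)=2w_{\mathrm H}(\mathbf c)\ge 12$.

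For the last assertion, suppose a weight-six codeword had exactly three cyclic runs. Then $\rho(\mathbf c)=3\ne 6=w_{\mathrm H}(\mathbf c)$, contradicting the equality $\rho(\mathbf c)=w_{\mathrm H}(\mathbf c)$ just established.

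There is no real obstacle here. The only point to check is the parity argument, which relies on $n$ being even so that the wrap-around pair $(n-1,0)$ also joins indices of opposite parity.
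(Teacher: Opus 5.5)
Your proof is correct and follows essentially the same route as the paper: Lemma~\ref{lem:parity-components} forces the support into a single parity class, the evenness of $n$ makes every nonzero coordinate its own cyclic run, and the run formula then gives $w_p(\mathbf c)=2w_{\mathrm H}(\mathbf c)\ge 12$. Your explicit check that the support is proper, so that the run formula applies, is a small detail the paper leaves implicit.
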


\begin{proof}
The assertion about parity follows from
Lemma~\ref{lem:parity-components}. Because $n$ is even,
cyclically adjacent coordinates have opposite parity.
Thus no two nonzero coordinates of $c$ are cyclically
adjacent, and each nonzero coordinate forms a separate
nonzero run. Hence
 \[
w_{p}(\mathbf{c})
=
w_{\mathrm H}(\mathbf{c})+\rho(\mathbf{c})
=
2w_{\mathrm H}(\mathbf{c}).
\]
In particular, a codeword of Hamming weight six has
six nonzero runs, rather than three.
\end{proof}

The preceding corollary excludes, in a single argument,
all four possible cyclic run length patterns
\[
4+1+1,\qquad 2+2+2,\qquad 3+1+2,\qquad 3+2+1
\]
for a word of Hamming weight six with three nonzero runs.
It also excludes the possibility that a word of Hamming
weight seven has only two nonzero runs.

\subsection{The minimum symbol-pair distance}

We now address the remaining possibility of a word of
Hamming weight eight supported on a single cyclic run.

\begin{lemma}\label{lem:single-run}
If a nonzero codeword of $\mathcal{C}_1$ has a single nonzero cyclic
run and does not have full support, then its Hamming
weight is at least nine.
\end{lemma}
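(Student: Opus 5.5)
The plan is to use only the cyclic structure of $\mathcal{C}_1$ and the degree of $g(x)$. The parity decomposition of Lemma~\ref{lem:parity-components} is not needed here. The key observation is that a codeword whose support is a single cyclic interval can be cyclically shifted so that the interval starts at position $0$. The shifted word is then a polynomial of low degree, and every nonzero codeword polynomial of degree less than $n$ must have degree at least $\deg g=8$.

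Concretely, let $\mathbf c\in\mathcal{C}_1$ be nonzero, with a single nonzero cyclic run and $\operatorname{supp}(\mathbf c)\ne\mathbb Z_n$. Write $w=\operatorname{wt}_H(\mathbf c)$. Then $\operatorname{supp}(\mathbf c)=\{s,s+1,\ldots,s+w-1\}$ modulo $n$ for some $s$, and $1\le w\le n-1$.

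1. Since $\mathcal{C}_1$ is cyclic, applying $\tau$ repeatedly ($n-s$ times) yields a codeword $\mathbf c'$ whose polynomial $c'(x)\equiv x^{n-s}c(x)\pmod{x^n-1}$ is reduced modulo $x^n-1$. Its support is exactly $\{0,1,\ldots,w-1\}$, so $c'(x)$ is a nonzero polynomial of degree $w-1<n$.
2. Since $c'(x)\in\langle g(x)\rangle$, we have $c'(x)\equiv a(x)g(x)\pmod{x^n-1}$ for some $a(x)$. Because $g(x)\mid x^n-1$, it follows that $g(x)$ divides $c'(x)$ as an ordinary polynomial in $\mathbb F_q[x]$.
3. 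A nonzero polynomial divisible by $g$ has degree at least $\deg g=8$, so $w-1\ge 8$, that is, $w\ge 9$.

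As a consistency check, this agrees with Lemma~\ref{lem:parity-components}. A run of length $w\ge 2$ meets both parity classes, so that lemma already gives $w\ge 8$. The degree argument removes the borderline case $w=8$, where the run would contain exactly four even and four odd positions.

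I expect no real obstacle. The only points needing care are in step 2: one must justify passing from a congruence modulo $x^n-1$ to genuine divisibility, which is where $g\mid x^n-1$ is used. One must also note that the hypothesis of non-full support is exactly what makes the run a proper interval, so that the shifted representative has degree $w-1\le n-2$ and is not reduced further.
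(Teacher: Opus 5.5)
Your proof is correct and is essentially the paper's argument: shift the run to $\{0,\ldots,w-1\}$, note that the resulting nonzero representative has degree $w-1<n$ and is divisible by $g(x)$, and conclude $w-1\ge 8$. Your justification that a congruence modulo $x^n-1$ becomes genuine divisibility because $g(x)\mid x^n-1$ fills in a step the paper only asserts.
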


\begin{proof}
Let the run have length $\ell<n$. Since $\mathcal{C}_1$ is cyclic,
a cyclic shift produces a codeword  $\mathbf{a}\in \mathcal{C}_1$ with
$
\operatorname{supp}(\mathbf{a})=\{0,1,\ldots,\ell-1\},
$
so its polynomial representative satisfies
$
\deg a(x)=\ell-1.
$
Every polynomial representative of a word in $\mathcal{C}_1$ is
divisible by $g(x)$. Since $a(x)\ne 0$, we obtain
$$
\ell-1=\deg (a(x))\ge\deg (g(x))=8.
$$
Hence $\ell\ge 9$.
\end{proof}

\begin{theorem}\label{thm:mds-pair-distance-ten}
For every odd prime power $q\ge 7$, the cyclic code $\mathcal{C}_1$
generated by \eqref{eq:gen-g1} has parameters
$
[n,n-8,6]_q,
$
in the Hamming metric and minimum symbol-pair distance
$
d_{p}(\mathcal{C}_1)=10.
$
Consequently, $\mathcal{C}_1$ is an MDS symbol-pair code.
\end{theorem}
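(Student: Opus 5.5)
The plan is to combine the Singleton-type bound for the upper estimate with a case analysis on Hamming weight for the lower estimate. The Hamming parameters $[n,n-8,6]_q$ are already given by the dimension count after \eqref{eq:gen-g1} and by Lemma~\ref{lem:hamming-distance-six}. Lemma~\ref{lem:singleton} gives $d_p(\mathcal C_1)\le n-(n-8)+2=10$. It therefore remains to show that every nonzero $\mathbf c\in\mathcal C_1$ has $\operatorname{wt}_p(\mathbf c)\ge 10$. The equality $d_p=n-k+2$ then makes $\mathcal C_1$ MDS.

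Fix a nonzero $\mathbf c\in\mathcal C_1$. If $\operatorname{supp}(\mathbf c)=\mathbb Z_n$, then $\operatorname{wt}_p(\mathbf c)=n=2q+2\ge 16$. Otherwise $\operatorname{wt}_p(\mathbf c)=\operatorname{wt}_H(\mathbf c)+\rho(\mathbf c)$ with $\rho(\mathbf c)\ge1$, and I split by the Hamming weight $w=\operatorname{wt}_H(\mathbf c)$. Lemma~\ref{lem:hamming-distance-six} gives $w\ge 6$.

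\begin{itemize}
\item If $w\in\{6,7\}$, Corollary~\ref{cor:small-weight-pair} gives $\operatorname{wt}_p(\mathbf c)=2w\ge12$.
\item If $w=8$, the only way to get pair weight $9$ is $\rho(\mathbf c)=1$. That is a single run of length $8<n$, which Lemma~\ref{lem:single-run} excludes (it forces $w\ge9$). Hence $\rho\ge2$ and $\operatorname{wt}_p\ge10$.
\item If $w\ge9$, then $\operatorname{wt}_p(\mathbf c)\ge w+1\ge 10$.
\end{itemize}

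Thus $\operatorname{wt}_p(\mathbf c)\ge10$ for every nonzero codeword, and with the upper bound $d_p(\mathcal C_1)=10$.

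No step is a real obstacle, since the preceding lemmas carry all the content. The only care needed is in the bookkeeping: separate the full-support case, where the run formula fails, and note that $n<10$ cannot occur because $q\ge7$.
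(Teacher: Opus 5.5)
Your proposal is correct and follows essentially the same route as the paper. The paper also gets $d_p\le 10$ from the Singleton-type bound, and gets $d_p\ge 10$ by splitting on Hamming weight: full support handled separately, $w\in\{6,7\}$ via Corollary~\ref{cor:small-weight-pair}, $w=8$ via Lemma~\ref{lem:single-run} forcing $\rho\ge 2$, and $w\ge 9$ directly from $\rho\ge 1$.
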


\begin{proof}
The dimension calculation and
Lemma~\ref{lem:hamming-distance-six} give the stated
Hamming parameters. Since $\mathcal{C}_1$ is linear, its minimum
symbol-pair distance equals the minimum symbol-pair
weight of its nonzero codewords.

Let $\mathbf{c}\in \mathcal{C}_1$ be nonzero. Then $w_{\mathrm H}(\mathbf{c})\ge 6$.
If $w_{\mathrm H}(\mathbf{c})\in\{6,7\}$,
Corollary~\ref{cor:small-weight-pair} gives
$w_{p}(\mathbf{c})\ge 12.$ If $w_{\mathrm H}(\mathbf{c})=8$, then $c$ does not have full
support because $n\ge 16$. By
Lemma~\ref{lem:single-run}, its support cannot consist
of a single nonzero cyclic run. Hence $\rho(\mathbf{c})\ge 2$, and
$$
w_{p}(\mathbf{c})
=
w_{\mathrm H}(\mathbf{c})+\rho(\mathbf{c})
\ge 8+2=10.
$$
If $9\le w_{\mathrm H}(\mathbf{c})<n$, then $\rho(\mathbf{c})\ge 1$,
and therefore
$$
w_{p}(\mathbf{c})\ge w_{\mathrm H}(\mathbf{c})+1\ge 10.
$$
Finally, if $w_{\mathrm H}(\mathbf{c})=n$, then
$
w_{p}(\mathbf{c})=n\ge 16.
$
These cases exhaust all nonzero codewords, proving
$d_{p}(\mathcal{C}_1)\ge 10$.
The Singleton bound for symbol-pair codes states that
\[
|\mathcal{C}_1|\le q^{n-d_{p}(\mathcal{C}_1)+2}.
\]
Since $\dim (\mathcal{C}_1)=n-8$, we have $|\mathcal{C}_1|=q^{n-8}$, and thus
\[
n-8\le n-d_{p}(\mathcal{C}_1)+2.
\]
It follows that $d_{p}(\mathcal{C}_1)\le 10$. Consequently,
$d_{p}(\mathcal{C}_1)=10$, and equality holds in the
symbol-pair Singleton bound.
\end{proof}

 We now give two examples whose results are consistent with Theorem~\ref{thm:mds-pair-distance-ten}.

\begin{example}
Let $q=7$ and write $\mathbb F_{49}=\mathbb F_7(v)$,
where $v^2=-1$. Choose $\zeta=2+3v$, by Magma, $\zeta$ has order $16$.
And the code $\mathcal{C}_1$ has generator polynomial
\[
g_7(x)=x^8+3x^7+3x^6+6x^5+x^3+4x^2+4x+6
\]
and parameters $[16,8,6]_7$ with symbol-pair distance $d_p(\mathcal{C}_1)=10$.
\end{example}

\begin{example}
Let $q=11$ and write $\mathbb F_{121}=\mathbb F_{11}(v)$,
where $v^2=-1$. Choose $\zeta=1+3v$, by Magma, $\zeta$ has order $24$.
And the code $\mathcal{C}_1$ has generator polynomial
\[
g_{11}(x)=x^8+5x^7+5x^6+9x^5+2x^3+6x^2+6x+10
\]
and parameters $[24,16,6]_{11}$ with symbol-pair distance $d_p(\mathcal{C}_1)=10$.
\end{example}

\section{A length $4q-4$ family and the bound $9\le d_p\le 10$}
\label{sec:length-4q-4}

Throughout this section, let $q$ be an odd prime power such that
$
q\ge 7,$ and $q\equiv 3\pmod 4.
$
Put
$
M=q-1,$so $n=4M=4q-4.
$
Since $q\equiv3\pmod4$, we have $4\mid q+1$, and hence
$$
n=4(q-1)\mid (q-1)(q+1)=q^2-1.
$$
Therefore, $\mathbb F_{q^2}$ contains a primitive $n$th root of unity.
Fix such an element and denote it by $\zeta$.
Consider the cyclic code $\mathcal{C}_2\subseteq \mathbb F_q^n$ generated by
\begin{equation}
\label{eq:gen-g3-revised}
g(x)
=(x-1)(x+1)\prod_{i=1}^{3}
(x-\zeta^i)(x-\zeta^{qi}).
\end{equation}
The polynomial $g(x)$ belongs to $\mathbb F_q[x]$, since each pair
$
(x-\zeta^i)(x-\zeta^{qi})
$
is invariant under the Frobenius automorphism. Moreover, $g(x)$ divides
$x^n-1$. Since the characteristic of $\mathbb F_q$ does not divide $n$,
the polynomial $x^n-1$ is separable, and hence $\mathcal{C}_2$ is a simple-root cyclic
code.
The exponents of the roots of $g(x)$ are
$$
0,\quad 2M,\quad 1,\quad M+1,\quad
2,\quad 2M+2,\quad 3,\quad 3M+3
\pmod{4M}.
$$
For $M\ge6$, these exponents are pairwise distinct. Consequently,
 $
\deg (g(x))=8$ and $
\dim_{\mathbb F_q}\mathcal{C}_2=n-8=4q-12.
$
The Singleton bound for symbol-pair codes gives
\begin{equation}
\label{eq:upper-bound}
d_{p}(\mathcal{C}_2)
\le n-\dim(\mathcal{C}_2)+2
=10.
\end{equation}

We next establish the lower bound.

\subsection{Parity decomposition}

Define
$
\alpha=\zeta^4,$ and $ u=\zeta^2.
$
Then
$
\operatorname{ord}(\alpha)=M,
\operatorname{ord}(u)=2M,$ and $
u^2=\alpha.
$
Since $q=M+1$, we have
\[
u^q
=\zeta^{2q}
=\zeta^{2M+2}
=\zeta^{2M}\zeta^2
=-u.
\]
It follows that
\[
\alpha^q=(u^2)^q=(u^q)^2=u^2=\alpha,
\]
so $\alpha\in\mathbb F_q^*$, whereas $u\notin\mathbb F_q$. Thus
$
\mathbb F_{q^2}=\mathbb F_q\oplus u\mathbb F_q.
$
For a codeword $c(x)\in \mathcal{C}_2$, write
\begin{equation}
\label{eq:parity-decomposition}
c(x)=E(x^2)+xO(x^2),
\end{equation}
where
\[
E(y)=\sum_{j=0}^{2M-1}c_{2j}y^j
\,\,\, \text{and} \,\,\,
O(y)=\sum_{j=0}^{2M-1}c_{2j+1}y^j.
\]
The roots $1$ and $-1$ of $g(x)$ imply
$
E(1)=O(1)=0.
$
The roots $u$ and $-u$ imply
$
E(u^2)=O(u^2)=0.
$
Since $u^2=\alpha$, we obtain
\begin{equation}
\label{eq:parity-roots}
E(1)=E(\alpha)=0
\,\,\, \text{and} \,\,\,
O(1)=O(\alpha)=0.
\end{equation}
The roots $\zeta$ and $\zeta^3$ give
\begin{equation}
\label{eq:coupling-equations}
E(u)+\zeta O(u)=0,
\,\,\, \text{and} \,\,\,
E(u^3)+\zeta^3O(u^3)=0.
\end{equation}
Multiplying the first equation by $\zeta^2O(u^3)=uO(u^3)$ and the
second one by $O(u)$, we obtain
\begin{equation}
\label{eq:coupling-identity}
E(u^3)O(u)=uE(u)O(u^3).
\end{equation}
If $E(u)O(u)\ne0$, this becomes
\begin{equation}
\label{eq:coupling-ratio}
\frac{E(u^3)}{E(u)}
=
u\frac{O(u^3)}{O(u)}.
\end{equation}
Equation \eqref{eq:coupling-identity} is a necessary condition for a
polynomial satisfying the relevant root conditions. We shall use it only
as a necessary condition.

\subsection{Two auxiliary ratio lemmas}

We first record the behavior of binomials satisfying the two parity-root
conditions.

\begin{lemma}
\label{lem:binomial-ratio-revised}
Let
$
H(y)=ay^r+by^s
$
be a nonzero binomial with distinct exponents
$
0\le r,s\le 2M-1.
$
Assume that
$
H(1)=H(\alpha)=0.
$
Then, after interchanging $r$ and $s$ if necessary,
\[
s=r+M,
\qquad
H(y)=a(y^r-y^{r+M}),
\]
and
\[
H(u)\ne0,\qquad
H(u^3)\ne0,\qquad
\frac{H(u^3)}{H(u)}=\alpha^r\in\mathbb F_q^*.
\]
\end{lemma}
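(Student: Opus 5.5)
We argue directly from the two root conditions. First, both coefficients are nonzero: if, say, $b=0$, then $H(1)=a=0$ and $H$ would be zero, which is excluded. The condition $H(1)=0$ gives $a+b=0$, so $b=-a$ and $H(y)=a(y^r-y^s)$. The condition $H(\alpha)=0$ then gives $\alpha^r=\alpha^s$, that is, $\alpha^{s-r}=1$. Since $\operatorname{ord}(\alpha)=M$, this forces $M\mid s-r$. Because $r\ne s$ and $|s-r|\le 2M-1$, we get $s-r=\pm M$. After interchanging $r$ and $s$ if necessary (which only replaces $a$ by $-a$), we may take $s=r+M$ and $H(y)=a(y^r-y^{r+M})$, with $0\le r\le M-1$.

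Next we evaluate at $u$. Recall that $u^2=\alpha$ and $\operatorname{ord}(u)=2M$, so $u^M=-1$. Hence
\[
H(u)=a u^r(1-u^M)=2a u^r .
\]
Likewise $u^{3M}=(u^M)^3=-1$, so
\[
H(u^3)=a u^{3r}(1-u^{3M})=2a u^{3r}.
\]
Since the characteristic is odd, $a\ne0$ and $u\ne0$, both values are nonzero. Their quotient is $u^{2r}=\alpha^r$, and this lies in $\mathbb F_q^*$ because $\alpha\in\mathbb F_q^*$, as shown above.

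No step here is a real obstacle. The only point needing care is that the exponent bound $0\le r,s\le 2M-1$ gives $|s-r|<2M$. This excludes $s-r\in\{0,\pm2M\}$ and so pins down $s-r=\pm M$. It is also what makes $u^M=-1$, rather than $u^M=1$, the relevant identity.
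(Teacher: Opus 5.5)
Your proof is correct and follows the paper's argument essentially step for step: $H(1)=0$ forces $b=-a$, then $H(\alpha)=0$ together with $\operatorname{ord}(\alpha)=M$ and the exponent range forces $|s-r|=M$, and $u^M=u^{3M}=-1$ gives $H(u)=2au^r$, $H(u^3)=2au^{3r}$ and the ratio $u^{2r}=\alpha^r$. Your explicit check that both coefficients are nonzero is a small point the paper leaves implicit.
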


\begin{proof}
From $H(1)=0$ we obtain $b=-a$. Hence
$
0=H(\alpha)=a(\alpha^r-\alpha^s).
$
Since $a\ne0$ and $\alpha$ has order $M$,
$
M\mid(s-r).
$
The exponents lie in the interval $\{0,\ldots,2M-1\}$ and are distinct,
so $|s-r|=M$. After relabeling, $s=r+M$.

Since $u^M=\zeta^{2M}=-1$, we have
$
H(u)=a(u^r-u^{r+M})=2au^r\ne0,
$
and similarly
$
H(u^3)=a(u^{3r}-u^{3r+3M})=2au^{3r}\ne0.
$
Therefore,
\[
\frac{H(u^3)}{H(u)}
=u^{2r}
=\alpha^r\in\mathbb F_q^*.
\]
\end{proof}

We next treat trinomials. The formulation below includes the possible
overall shift of the support, which is needed in several of the support
configurations.

\begin{lemma}
\label{lem:trinomial-ratio-revised}
Let
$
Q(y)=A_0y^a+A_1y^{a+r}+A_2y^{a+s}
$
be a polynomial with three nonzero terms and pairwise distinct
exponents in $\{0,\ldots,2M-1\}$.
Assume that
$
Q(1)=Q(\alpha)=0.
$
Then \(Q(u)\neq0\) and
\begin{equation}
\label{eq:trinomial-ratio-shifted}
\frac{Q(u^3)}{Q(u)}
=
-\alpha^a(u^{r+s}+u^r+u^s).
\end{equation}
Furthermore,
\begin{equation}
\label{eq:trinomial-not-uFq}
\frac{Q(u^3)}{Q(u)}
\notin u\mathbb F_q^*.
\end{equation}
\end{lemma}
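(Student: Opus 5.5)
The plan is to normalise $Q$, solve for the coefficients from the two root conditions, and express $Q(u)$ and $Q(u^3)$ as $3\times 3$ determinants of Vandermonde type. Write $Q(y)=y^a(A_0+A_1y^r+A_2y^s)$, where $r,s$ are nonzero, distinct, and satisfy $|r|,|s|,|r-s|<2M$ because all exponents lie in $\{0,\ldots,2M-1\}$. Put $x=\alpha^r$ and $z=\alpha^s$. The conditions $Q(1)=Q(\alpha)=0$ become
\[
A_0+A_1+A_2=0,\qquad A_0+xA_1+zA_2=0 .
\]

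First I show $x\ne1$, $z\ne1$ and $x\ne z$. If $x=1$, subtracting the two equations gives $A_2(z-1)=0$, so $z=1$. Then $r\equiv s\equiv0\pmod M$, and the three distinct exponents $a,a+r,a+s$ would all be congruent modulo $M$ inside a window of length $2M$, which is impossible. The case $z=1$ is symmetric. If $x=z\ne1$, the two equations force $A_0=0$, again a contradiction. Hence Cramer's rule gives
\[
A_1=\frac{A_0(z-1)}{x-z},\qquad A_2=\frac{A_0(1-x)}{x-z},
\]
and for every $t$,
\[
(x-z)\,t^{-a}A_0^{-1}Q(t)=(x-z)+(z-1)t^r+(1-x)t^s=\pm\det\begin{pmatrix}1&1&1\\1&x&z\\1&t^r&t^s\end{pmatrix}.
\]

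Next put $X=u^r$ and $Y=u^s$, so that $x=X^2$ and $z=Y^2$. At $t=u$ the rows are $(1,1,1)$, $(1,X^2,Y^2)$ and $(1,X,Y)$. This is a row permutation of the Vandermonde matrix in $1,X,Y$, so its determinant is $\pm(X-1)(Y-1)(Y-X)$. At $t=u^3$ the rows are $(1,1,1)$, $(1,X^2,Y^2)$ and $(1,X^3,Y^3)$. Subtracting the first column and factoring gives the Schur-type determinant $(X-1)(Y-1)(Y-X)(XY+X+Y)$. The factors $X-1$, $Y-1$, $Y-X$ are nonzero because $u$ has order $2M$ and $0<|r|,|s|,|r-s|<2M$. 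This gives $Q(u)\ne0$.

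Dividing the two expressions, and using $u^{3a}/u^a=u^{2a}=\alpha^a$, yields \eqref{eq:trinomial-ratio-shifted}. The only delicate bookkeeping is the overall sign, which comes from the single row swap that turns the $t=u$ matrix into a Vandermonde matrix. I would verify it on a test case, e.g.\ $A_0=1$, and expect the minus sign.

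Finally, for \eqref{eq:trinomial-not-uFq}, note that $\alpha^a\in\mathbb F_q^*$, so it suffices to show $w:=XY+X+Y\notin u\mathbb F_q^*$. Since $\mathbb F_{q^2}=\mathbb F_q\oplus u\mathbb F_q$ and $u^q=-u$, an element lies in $u\mathbb F_q$ exactly when $w^q=-w$. We have $X^q=(-1)^rX$ and $Y^q=(-1)^sY$, and I split on the parities of $r$ and $s$:
\begin{itemize}
\item if both are even, then $w^q=w$, so $w^q=-w$ forces $w=0\notin u\mathbb F_q^*$;
\item if $r$ is even and $s$ is odd, then $w^q=-XY+X-Y$, and $w^q=-w$ would force $2X=0$;
\item if $r$ is odd and $s$ is even, symmetrically $w^q=-w$ would force $2Y=0$;
\item if both are odd, then $w^q=XY-X-Y$, and $w^q=-w$ would force $2XY=0$.
\end{itemize}
The last three cases are impossible since the characteristic is odd and $X,Y\ne0$. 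I expect the determinant evaluation and its sign to be the main obstacle; the parity argument is routine.
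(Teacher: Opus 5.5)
Your proof is correct and follows essentially the same route as the paper: you solve the two linear root conditions for the coefficients (your Cramer's rule and determinant packaging is the paper's ``direct elimination''), obtain, up to a common nonzero factor, the same factorizations $(u^r-1)(u^s-1)(u^s-u^r)$ and $-(u^r-1)(u^s-1)(u^s-u^r)(u^{r+s}+u^r+u^s)$, and then split on the parities of $r$ and $s$. The sign you hedge on is already settled by your own setup, since the common $\pm$ relating the expression to the determinant cancels in the ratio and leaves only the $-1$ from the row swap; your treatment of the case $\alpha^r=\alpha^s=1$ via the length-$2M$ window is also slightly more careful than the paper's one-line justification of distinctness modulo $M$.
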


\begin{proof}
Multiplication by $y^a$ only contributes the factor
\[
\frac{u^{3a}}{u^a}=u^{2a}=\alpha^a
\]
to the quotient. It is therefore enough to consider
$
Q(y)=A_0+A_1y^r+A_2y^s.
$

The exponents $0,r,s$ are pairwise distinct modulo $M$. Indeed, if two
of them were congruent modulo $M$, then the equations
$
Q(1)=Q(\alpha)=0
$
would force the coefficient corresponding to the remaining exponent to
vanish, contrary to the assumption that $Q$ has three nonzero terms.
Since $u^{2r}=\alpha^r$ and $u^{2s}=\alpha^s$, the equations
$
Q(1)=Q(\alpha)=0
$
give
\[
A_1(u^{2r}-1)+A_2(u^{2s}-1)=0.
\]
A direct elimination yields
\[
Q(u)
=
\frac{A_1}{u^{2s}-1}(u^{r}-1)(u^{s}-1)(u^{s}-u^{r}),
\]
and
\[
Q(u^3)
=
\frac{A_1}{u^{2s}-1}
(u^{r}-1)(u^{s}-1)(u^{s}-u^{r})(-u^{r+s}-u^{r}-u^{s}).
\]
The factors in the denominator are nonzero because the three exponents are
distinct modulo $M$, and hence $u^{r},u^{s},1$ are pairwise distinct. Thus
$Q(u)\ne0$ and
\[
\frac{Q(u^3)}{Q(u)}
=-(u^{r+s}+u^{r}+u^{s}).
\]
Restoring the factor $\alpha^a$ proves \eqref{eq:trinomial-ratio-shifted}.

It remains to prove \eqref{eq:trinomial-not-uFq}. Since
\[
u^{qr}=(-1)^ru^{r},\,\,\,
u^{qs}=(-1)^su^{s}\,\,\, \text{and}\,\,\,
u^{q(r+s)}=(-1)^{r+s}u^{r+s},
\]
we consider the parity of $r$ and $s$.
If $r$ and $s$ are both odd, then $u^{r+s}\in\mathbb F_q^*$, whereas
$u^{r},u^{s}\in u\mathbb F_q^*$.
If $r$ is odd and $s$ is even, then $u^{s}\in\mathbb F_q^*$ is a nonzero
$\mathbb F_q$-component of $u^{r+s}+u^{r}+u^{s}$.
If $r$ is even and $s$ is odd, then $u^{r}\in\mathbb F_q^*$ is a nonzero
$\mathbb F_q$-component.
Finally, if $r$ and $s$ are both even, then
$
u^{r+s}+u^{r}+u^{s}\in\mathbb F_q.
$
In this last case the expression cannot belong to $u\mathbb F_q^*$.
Therefore, in all cases,
$
u^{r+s}+u^{r}+u^{s}\notin u\mathbb F_q^*.
$
Since $\alpha^a\in\mathbb F_q^*$, equation
\eqref{eq:trinomial-ratio-shifted} implies
$
\frac{Q(u^3)}{Q(u)}
\notin u\mathbb F_q^*.
$
\end{proof}

\subsection{Exclusion of small pair weights}

The roots
$
1,\zeta,\zeta^2,\zeta^3
$
are four consecutive roots of $g(x)$. The BCH bound gives
\begin{equation}
\label{eq:hamming-lower-bound}
d_{\mathrm H}(\mathcal{C}_2)\ge5.
\end{equation}
We first exclude a single nonzero run.

\begin{lemma}
\label{lem:single-run1}
There is no nonzero codeword of $\mathcal{C}_2$ whose support is a single cyclic run
of length at most seven.
\end{lemma}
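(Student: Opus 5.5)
We argue exactly as in Lemma~\ref{lem:single-run}, using only the degree of the generator polynomial.

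Suppose, for contradiction, that $\mathbf c\in\mathcal C_2$ is nonzero and its support is a single cyclic run of length $\ell\le 7$. Since $\ell\le 7<n$, the support is a proper cyclic interval $\{i_0,i_0+1,\ldots,i_0+\ell-1\}$ modulo $n$. Because $\mathcal C_2$ is cyclic, we may apply $\tau^{-i_0}$, i.e. multiply by $x^{n-i_0}$ modulo $x^n-1$. This gives a codeword $\mathbf a\in\mathcal C_2$ with
\[
\operatorname{supp}(\mathbf a)=\{0,1,\ldots,\ell-1\}.
\]
Its reduced polynomial representative $a(x)$ is therefore a nonzero polynomial of degree exactly $\ell-1\le 6$, with nonzero constant term.

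Every codeword of $\mathcal C_2=\langle g(x)\rangle$ has reduced representative (degree $<n$) divisible by $g(x)$. This holds because $g(x)\mid x^n-1$: any element of the ideal is congruent to $g(x)b(x)$ modulo $x^n-1$, and reducing modulo $x^n-1$ preserves divisibility by $g$. Hence $g(x)\mid a(x)$ with $a(x)\ne 0$, which forces
\[
\deg a(x)\ge \deg g(x)=8.
\]
Here we use that the eight exponents $0,2M,1,M+1,2,2M+2,3,3M+3$ are distinct, so $\deg g=8$. This gives $\ell-1\ge 8$, i.e. $\ell\ge 9$, contradicting $\ell\le 7$.

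The only point needing care is the divisibility of the reduced representative by $g$, and the argument above settles it. In fact the proof shows more: any single-run, non-full-support codeword of $\mathcal C_2$ has run length at least $9$.
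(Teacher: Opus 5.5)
Your proof is correct and follows the paper's argument exactly. You shift the run to begin at position $0$, note that the resulting nonzero representative has degree at most $6$, and contradict its divisibility by $g(x)$, which has degree $8$.
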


\begin{proof}
After a cyclic shift, such a codeword would have support contained in
\[
\{0,1,\ldots,L-1\}
\]
for some $L\le7$. Its representative polynomial would therefore have
degree at most $L-1\le6$. Since every codeword polynomial of degree less
than $n$ is divisible by $g(x)$ and $\deg (g(x))=8$, this is impossible.
\end{proof}

We next handle codewords of Hamming weight five.

\begin{lemma}
\label{lem:weight-five}
There is no codeword $\mathbf{c}\in \mathcal{C}_2$ with
$
w_{\mathrm H}(\mathbf{c})=5
$
and with nonzero coordinates of both parities.
\end{lemma}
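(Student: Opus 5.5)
We argue by contradiction. Let $\mathbf c\in\mathcal C_2$ have $w_{\mathrm H}(\mathbf c)=5$ with nonzero coordinates of both parities, and write $c(x)=E(x^2)+xO(x^2)$ as in \eqref{eq:parity-decomposition}. Then $E$ and $O$ are both nonzero, and $w_{\mathrm H}(E)+w_{\mathrm H}(O)=5$.

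The first step is to show that neither component is a monomial. By \eqref{eq:parity-roots}, $E(1)=0$, and a nonzero monomial never vanishes at $1$; the same holds for $O$. Hence each of $E,O$ has at least two terms, and up to swapping the roles of the parities we are left with the split $\{w_{\mathrm H}(E),w_{\mathrm H}(O)\}=\{2,3\}$.

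Next we evaluate both components at $u$ and $u^3$. Whichever of $E,O$ is the binomial falls under Lemma~\ref{lem:binomial-ratio-revised}. It is nonzero at $u$ and at $u^3$, and its ratio $H(u^3)/H(u)$ equals $\alpha^r\in\mathbb F_q^*$. The other component is a trinomial satisfying the hypotheses of Lemma~\ref{lem:trinomial-ratio-revised}, written with the shift $a$ equal to its smallest exponent. So it is nonzero at $u$, and its ratio lies outside $u\mathbb F_q^*$. In particular $E(u)O(u)\neq0$, so the ratio form \eqref{eq:coupling-ratio} of the coupling identity applies.

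Finally we derive the contradiction in the two cases.

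\emph{Case 1: $E$ is the binomial and $O$ the trinomial.} Relation \eqref{eq:coupling-ratio} gives
\[
\frac{O(u^3)}{O(u)}=u^{-1}\alpha^r .
\]
Since $u^2=\alpha\in\mathbb F_q^*$, we have $u^{-1}=\alpha^{-1}u\in u\mathbb F_q^*$, so the right-hand side lies in $u\mathbb F_q^*$. This contradicts \eqref{eq:trinomial-not-uFq}.

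\emph{Case 2: $O$ is the binomial and $E$ the trinomial.} Now
\[
\frac{E(u^3)}{E(u)}=u\,\alpha^r\in u\mathbb F_q^*,
\]
which again contradicts \eqref{eq:trinomial-not-uFq}.

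Both cases are impossible, which proves the lemma.

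The only delicate point is checking that the hypotheses of the auxiliary lemmas hold exactly. For a trinomial $Q(y)=A_0y^a+A_1y^{a+r}+A_2y^{a+s}$ with exponents in $\{0,\ldots,2M-1\}$ and $r,s>0$, choosing $a$ as the least exponent keeps all exponents in range. The claim inside Lemma~\ref{lem:trinomial-ratio-revised} that the exponents are distinct modulo $M$ is already derived there from $Q(1)=Q(\alpha)=0$. One must also confirm that $u^{-1}\in u\mathbb F_q^*$, which is immediate from $u^2=\alpha\in\mathbb F_q^*$.
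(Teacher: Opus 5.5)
Your proof is correct and follows the paper's argument essentially step for step: $E(1)=O(1)=0$ rules out monomial components, the binomial component has ratio in $\mathbb F_q^*$, and in either case the coupling identity places the trinomial component's ratio in $u\mathbb F_q^*$, contradicting Lemma~\ref{lem:trinomial-ratio-revised}. Your explicit check that $E(u)O(u)\neq0$, which is what makes the ratio form \eqref{eq:coupling-ratio} of the coupling identity legitimate, is a small detail that the paper leaves implicit.
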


\begin{proof}
The two equations in \eqref{eq:parity-roots} imply that every nonzero
parity component contains at least two terms. Since the total Hamming
weight is five, one parity component is a binomial and the other is a
trinomial.

Assume first that $E$ is a trinomial and $O$ is a binomial. By
Lemma~\ref{lem:binomial-ratio-revised},
\[
\frac{O(u^3)}{O(u)}\in\mathbb F_q^*.
\]
The coupling identity \eqref{eq:coupling-ratio} would then imply
\[
\frac{E(u^3)}{E(u)}
=
u\frac{O(u^3)}{O(u)}
\in u\mathbb F_q^*,
\]
contradicting Lemma~\ref{lem:trinomial-ratio-revised}.
If $E$ is a binomial and $O$ is a trinomial, then
\[
\frac{O(u^3)}{O(u)}
=
u^{-1}\frac{E(u^3)}{E(u)}.
\]
Since
\[
u^{-1}=\alpha^{-1}u\in u\mathbb F_q^*,
\]
the right hand side belongs to $u\mathbb F_q^*$, again contradicting
Lemma~\ref{lem:trinomial-ratio-revised}.
\end{proof}

\begin{corollary}
\label{cor:weight-five-pair}
If $\mathbf{c}\in \mathcal{C}_2$ has Hamming weight five, then
\(
w_{p}(\mathbf{c})=10.
\)
\end{corollary}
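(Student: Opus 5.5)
By Lemma~\ref{lem:weight-five}, a codeword $\mathbf{c}\in\mathcal{C}_2$ of Hamming weight five has all its nonzero coordinates in a single parity class. The plan mirrors Corollary~\ref{cor:small-weight-pair}. Since $n=4q-4$ is even, the coordinates $i$ and $i+1$ (indices modulo $n$) always have opposite parity, and this includes the wrap-around pair $(n-1,0)$. Hence no two nonzero coordinates of $\mathbf{c}$ are cyclically adjacent. Each nonzero coordinate is therefore a separate cyclic nonzero run, which gives $\rho(\mathbf{c})=5$.

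Next I check that the run formula applies. Since $n\ge 24>5$, the support is not all of $\mathbb{Z}_n$. The decomposition $\operatorname{wt}_p(\mathbf{c})=\operatorname{wt}_H(\mathbf{c})+\rho(\mathbf{c})$ from the preliminaries then yields $w_p(\mathbf{c})=5+5=10$.

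The only genuine content is the parity restriction, which is already supplied by Lemma~\ref{lem:weight-five}. That lemma in turn relies on the ratio Lemmas~\ref{lem:binomial-ratio-revised} and~\ref{lem:trinomial-ratio-revised}, together with the fact that each nonempty parity component has at least two terms. The remaining steps are routine, so I expect no real obstacle. The one point to watch is that the wrap-around adjacency also respects parity, and this holds because $n$ is even.
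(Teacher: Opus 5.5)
Your proposal is correct and matches the paper's proof essentially line for line: Lemma~\ref{lem:weight-five} confines the support to one parity class, the evenness of $n$ (including the wrap-around pair) makes each nonzero coordinate its own run, and the run formula gives $w_p(\mathbf{c})=2\cdot 5=10$. Your explicit check that the support is proper ($n\ge 24>5$), so that the run formula applies, is a small detail the paper leaves implicit.
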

\begin{proof}
By Lemma~\ref{lem:weight-five}, all nonzero coordinates have the same
parity. Since $n$ is even, two coordinates of the same parity are never
cyclically adjacent. Hence every nonzero coordinate is a separate
nonzero run, and
\(
w_{p}(c)=2w_{\mathrm H}(c)=10.
\)
\end{proof}

We now exclude codewords of Hamming weight six and two nonzero runs.

\subsubsection{The pattern $5+1$}

After a cyclic shift, the support can be written as
\[
\{0,1,2,3,4,\ell\},
\]
where $\ell$ is separated from the first run by at least one zero.

If $\ell=2t$ is even, then the odd parity component has support
$
\{0,1\}
$
as a polynomial in $y=x^2$. Its two exponents differ by one, contradicting
Lemma~\ref{lem:binomial-ratio-revised}, since the exponents of a binomial
satisfying the parity-root conditions must differ by $M$.

Suppose now that $\ell=2t+1$ is odd. Then
\[
\operatorname{supp}(E)=\{0,1,2\},
\qquad
\operatorname{supp}(O)=\{0,1,t\}.
\]
The trinomial ratio lemma gives
\[
\frac{E(u^3)}{E(u)}
=-(u^3+u^2+u),
\]
and
\[
\frac{O(u^3)}{O(u)}
=-(u^{t+1}+u^t+u).
\]
The coupling identity implies
$
u^3+u^2+u=u(u^{t+1}+u^t+u).
$
After cancellation,
$
1+\alpha=u^t(u+1).
$
If $t$ is even, then $u^t=k\in\mathbb F_q^*$, so the right-hand side is
$
k+ku,
$
whose $u$-component is nonzero, whereas $1+\alpha\in\mathbb F_q$.
If $t$ is odd, then $u^t=ku$ for some $k\in\mathbb F_q^*$, and
$
u^t(u+1)=k\alpha+ku,
$
again with a nonzero $u$-component. Both cases are impossible.

Thus the pattern $5+1$ cannot occur.

\subsubsection{The pattern $4+2$}

After a cyclic shift, the support is
$
\{0,1,2,3,\ell,\ell+1\}.
$

First suppose that $\ell=2t$ is even. Then both parity components have
support
$
\{0,1,t\}.
$
The equations $Q(1)=Q(\alpha)=0$ define a 1-dimensional solution
space for coefficients on this support. Hence there is a fixed nonzero
trinomial $Q(y)$ such that
\[
E(y)=AQ(y),\qquad O(y)=BQ(y)
\]
for some $A,B\in\mathbb F_q$.
By Lemma~\ref{lem:trinomial-ratio-revised}, $Q(u)\ne0$. The root condition
$c(\zeta)=0$ therefore gives
\[
Q(u)(A+\zeta B)=0,
\]
and hence
$
A+\zeta B=0.
$
Since $A,B\in\mathbb F_q$ and $\zeta\notin\mathbb F_q$, this forces
$
A=B=0,
$
contrary to the assumption that $c$ is nonzero.

Now suppose that $\ell=2t+1$ is odd. Then
\[
\operatorname{supp}(E)=\{0,1,t+1\},
\qquad
\operatorname{supp}(O)=\{0,1,t\}.
\]
The ratio formulas give
\[
\frac{E(u^3)}{E(u)}
=-(u^{t+2}+u^{t+1}+u),
\]
and
\[
\frac{O(u^3)}{O(u)}
=-(u^{t+1}+u^t+u).
\]
Using \eqref{eq:coupling-ratio}, we obtain
$
u^{t+2}+u^{t+1}+u
=
u^{t+2}+u^{t+1}+u^2.
$
Thus
$
u=u^2=\alpha,
$
which is impossible because $u\notin\mathbb F_q$ and
$\alpha\in\mathbb F_q$.

Therefore the pattern $4+2$ cannot occur.

\subsubsection{The pattern $3+3$}

After a cyclic shift, the support is
\[
\{0,1,2,\ell,\ell+1,\ell+2\}.
\]

Suppose first that $\ell=2t+1$ is odd. Then
\[
\operatorname{supp}(E)=\{0,1,t+1\},
\qquad
\operatorname{supp}(O)=\{0,t,t+1\}.
\]
Consequently,
\[
\frac{E(u^3)}{E(u)}
=-(u^{t+2}+u^{t+1}+u),
\]
and
\[
\frac{O(u^3)}{O(u)}
=-(u^{2t+1}+u^{t+1}+u^t).
\]
The coupling identity yields
\[
u^{t+2}+u^{t+1}+u
=
u^{2t+2}+u^{t+2}+u^{t+1}.
\]
Therefore
$
u^{2t+1}=1.
$
This is impossible, since $\operatorname{ord}(u)=2M$ is even, whereas
$2t+1$ is odd.

Now suppose that $\ell=2t$ is even. The odd component is a binomial
supported on $\{0,t\}$. Lemma~\ref{lem:binomial-ratio-revised} gives
$
t=M.
$
Thus
$
O(y)=D(1-y^M)
$
for some $D\in\mathbb F_q^*$.
The even component has support
$
\{0,1,M,M+1\}.
$
Write
\[
E(y)=A+By+Cy^M+D'y^{M+1}.
\]
The equations $E(1)=E(\alpha)=0$ imply
$
A+B+C+D'=0,
$
and
$
A+B\alpha+C+D'\alpha=0.
$
Subtracting gives
$
(B+D')(\alpha-1)=0.
$
Since $\alpha\ne1$, we obtain
$
D'=-B,~ C=-A.
$
Therefore
\[
E(y)=(1-y^M)(A+By).
\]
Since $u^M=-1$ and $u^{3M}=-1$, the common factor cancels in the
quotient, and the coupling identity gives
\[
\frac{A+Bu^3}{A+Bu}=u.
\]
Comparing the coefficients in the basis $\{1,u\}$ yields
\(
A=\alpha B.
\)
The root condition $c(\zeta)=0$ now gives
\(
A+Bu+\zeta D=0.
\)
Since $A=\alpha B$,
\[
B(\alpha+u)=-D\zeta.
\]
Put
\(
k=-D/B\in\mathbb F_q^*.
\)
Then
\(
\alpha+u=k\zeta.
\)
Since $\zeta^2=u$, squaring gives
\[
u=\frac{(\alpha+u)^2}{k^2}
=\frac{\alpha^2+\alpha+2\alpha u}{k^2}.
\]
Hence
\[
\frac{\alpha^2+\alpha}{k^2}
+
\left(\frac{2\alpha}{k^2}-1\right)u=0.
\]
The linear independence of $1$ and $u$ over $\mathbb F_q$ implies
\(
\alpha^2+\alpha=0.
\)
Since $\alpha\ne0$, we get $\alpha=-1$, contradicting
\(
\operatorname{ord}(\alpha)=M\ge6.
\)

Thus the pattern $3+3$ cannot occur.

\subsection{The symbol-pair distance}

We can now prove the lower bound.

\begin{theorem}
\label{thm:pair-distance-lower}
For every odd prime power $q\ge7$ satisfying
$q\equiv3\pmod4$, the cyclic code defined by
\eqref{eq:gen-g3-revised} satisfies
\(
d_{p}(\mathcal{C}_2)\ge9.
\)
\end{theorem}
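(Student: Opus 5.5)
We argue by cases on the Hamming weight, using the run decomposition $w_p(\mathbf c)=w_{\mathrm H}(\mathbf c)+\rho(\mathbf c)$ for codewords without full support. Since $\mathcal C_2$ is linear, it suffices to show that every nonzero codeword $\mathbf c$ has $w_p(\mathbf c)\ge 9$. By \eqref{eq:hamming-lower-bound}, $w_{\mathrm H}(\mathbf c)\ge 5$. If $\mathbf c$ has full support, then $w_p(\mathbf c)=n=4q-4\ge 24$, so we may assume $\operatorname{supp}(\mathbf c)\ne\mathbb Z_n$.

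We treat the Hamming weights in increasing order.
\begin{itemize}
\item \emph{Weight five.} Corollary~\ref{cor:weight-five-pair} gives $w_p(\mathbf c)=10$.
\item \emph{Weight six.} Here $w_p(\mathbf c)=6+\rho(\mathbf c)$, so we must show $\rho(\mathbf c)\ge 3$. A single run of length six is excluded by Lemma~\ref{lem:single-run1}. If $\rho(\mathbf c)=2$, the two run lengths, up to a cyclic shift that places the longer run (or either run, in the $3+3$ case) starting at $0$, form one of the patterns $5+1$, $4+2$ or $3+3$. Each of these was excluded in the preceding subsubsections. Hence $\rho(\mathbf c)\ge3$ and $w_p(\mathbf c)\ge9$.
\item \emph{Weight seven.} We need $\rho(\mathbf c)\ge 2$. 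A single run of length seven is impossible by Lemma~\ref{lem:single-run1}, since $g(x)$ of degree $8$ cannot divide a nonzero polynomial of degree at most $6$. Thus $w_p(\mathbf c)\ge 9$.
\item \emph{Weight at least eight.} Without full support we have $\rho(\mathbf c)\ge1$, so $w_p(\mathbf c)\ge 9$.
\end{itemize}

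The cases are exhaustive, so $d_p(\mathcal C_2)\ge 9$.

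The only delicate step is confirming that the case analysis for weight six with two runs really covers every configuration. The cyclic shift must normalize the support to the forms $\{0,\dots,4,\ell\}$, $\{0,1,2,3,\ell,\ell+1\}$ and $\{0,1,2,\ell,\ell+1,\ell+2\}$. In each form the parity split of $\ell$ yields exactly the even and odd components used there. These components are either trinomials or binomials satisfying \eqref{eq:parity-roots}, so Lemmas~\ref{lem:binomial-ratio-revised} and \ref{lem:trinomial-ratio-revised} apply.

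One point needs care: a weight-six codeword might have both parity components with three terms, or a split of four and two terms. Such configurations are already covered within the pattern analysis, because the pattern determines the parity supports. No further case arises, because any two-run support is one of the three normalized forms. With that checked, the theorem follows directly.
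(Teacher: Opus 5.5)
Your proof is correct and is essentially the paper's argument. The paper assumes $w_p(\mathbf c)\le 8$ and derives a contradiction from the same ingredients (full support excluded since $n\ge 24$, single runs of length at most seven excluded by the degree argument, weight five handled by Corollary~\ref{cor:weight-five-pair}, and weight six with two runs reduced to the excluded patterns $5+1$, $4+2$, $3+3$), whereas you run the identical case analysis directly by Hamming weight and correctly cite Lemma~\ref{lem:single-run1}, where the paper's proof points to the analogous Section III lemma for $\mathcal{C}_1$.
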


\begin{proof}
Let $c\in \mathcal{C}_2$ be nonzero. By \eqref{eq:hamming-lower-bound},
\(
w_{\mathrm H}(c)\ge5.
\)
Suppose, for contradiction, that
\(
w_{p}(c)\le8.
\)
Since $n\ge24$, the support of $c$ is not the whole coordinate set. Hence
$
w_{p}(c)=w_{\mathrm H}(c)+\rho(c).
$

If $\rho(c)=1$, then
\(
w_{\mathrm H}(c)\le7,
\)
and Lemma~\ref{lem:single-run} gives a contradiction.
Therefore $\rho(c)\ge2$. It follows that
\(
w_{\mathrm H}(c)\le6.
\)
If $w_{\mathrm H}(c)=5$, then Corollary~\ref{cor:weight-five-pair}
gives
\(
w_{p}(c)=10,
\)
contradicting $w_{p}(c)\le8$.
Thus $w_{\mathrm H}(c)=6$. Since $\rho(c)\ge2$ and
$w_{p}(c)\le8$, we must have
\(
\rho(c)=2.
\)
The two cyclic runs therefore have lengths
\[
5+1,\qquad4+2,\qquad\text{or}\qquad3+3.
\]
All three patterns were excluded above.
This contradiction proves
\(
d_{p}(\mathcal{C}_2)\ge9.
\)
\end{proof}

Combining Theorem~\ref{thm:pair-distance-lower} with the upper bound
\eqref{eq:upper-bound}, we obtain the main result.

\begin{theorem}
\label{thm:pair-distance-9-10}
For every odd prime power $q\ge7$ satisfying
$q\equiv3\pmod4$, the code $\mathcal{C}_2$ defined by
\eqref{eq:gen-g3-revised} has length
\(
n=4q-4,
\)
dimension
\(
\dim \mathcal{C}_2=4q-12,
\)
and minimum symbol-pair distance satisfying
\[
{9\le d_{p}(\mathcal{C}_2)\le10.}
\]
In particular, whenever $d_{p}(\mathcal{C}_2)=10$, the code is MDS in the
symbol-pair metric.
\end{theorem}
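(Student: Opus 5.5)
The statement assembles results already established in this section, so the plan is to collect them in order. First, the parameters. The length is $n=4M=4q-4$ by definition. For the dimension, I will use that the exponents $0,2M,1,M+1,2,2M+2,3,3M+3$ of the roots of $g(x)$ in \eqref{eq:gen-g3-revised} are pairwise distinct modulo $4M$ once $M\ge 6$. This holds because $q\ge 7$. Hence $\deg g=8$, and since $\mathcal{C}_2=\langle g(x)\rangle$, we get $\dim\mathcal{C}_2=n-8=4q-12$.

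Second, the two bounds. The upper bound $d_p(\mathcal{C}_2)\le 10$ is exactly \eqref{eq:upper-bound}, i.e.\ the symbol-pair Singleton bound of Lemma~\ref{lem:singleton} applied with $k=4q-12$. The lower bound $d_p(\mathcal{C}_2)\ge 9$ is Theorem~\ref{thm:pair-distance-lower}, whose hypotheses ($q$ an odd prime power, $q\ge 7$, $q\equiv 3\pmod 4$) coincide with those of the present statement.

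In that lower-bound argument, I will be careful with one citation. The single-run case should invoke Lemma~\ref{lem:single-run1} for $\mathcal{C}_2$, not Lemma~\ref{lem:single-run}, which concerns $\mathcal{C}_1$. Lemma~\ref{lem:single-run1} excludes a single run of length at most $7$, which suffices because $\rho(\mathbf c)=1$ and $w_p(\mathbf c)\le 8$ force $w_H(\mathbf c)\le 7$. The two arguments are the same degree-versus-$\deg g$ comparison, so nothing further is needed.

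Finally, the MDS clause. If $d_p(\mathcal{C}_2)=10$, then $d_p=n-k+2$ with $k=4q-12$, which is the definition of an MDS symbol-pair code. Similarly, if $d_p=9$, the code is AMDS.

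I expect no step here to be a real obstacle, since the substantive work (excluding the weight-five mixed-parity words and the patterns $5+1$, $4+2$, $3+3$) was done before the statement. The only points needing care are the distinctness of the eight exponents, which underlies $\deg g=8$, and the correct single-run lemma reference noted above.
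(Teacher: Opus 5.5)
Your proposal is correct and follows the paper's proof, which also just combines the dimension count from $\deg g=8$, the Singleton bound \eqref{eq:upper-bound}, and Theorem~\ref{thm:pair-distance-lower}. You are right that the single-run case in Theorem~\ref{thm:pair-distance-lower} should cite Lemma~\ref{lem:single-run1}, not Lemma~\ref{lem:single-run} (which concerns $\mathcal{C}_1$), and your extra remark that $d_p=9$ gives an AMDS code is also correct.
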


\begin{proof}
The length and dimension were established above. The lower bound follows
from Theorem~\ref{thm:pair-distance-lower}, while the upper bound is the
symbol-pair Singleton bound \eqref{eq:upper-bound}.
\end{proof}

We now give two examples whose results are consistent with Theorem~\ref{thm:pair-distance-9-10}.

\begin{example}
Let $q=31$ and write
$\mathbb F_{31^2}=\mathbb F_{31}(v)$, where $v^2=-1$.
Choose $\zeta=3+3v$,  by Magma, $\zeta$ has order $120$. And
the code $\mathcal{C}_2$ defined by (3) has generator polynomial
\[
g_{31}(x)
=x^8+9x^7+7x^6+22x^5+11x^4+3x^3+28x^2+28x+15
\]
and parameters $[120,112]_{31}$ with symbol-pair distance $d_p(\mathcal{C}_2)=9$.
\end{example}

\begin{example}
Let $q=7$ and write
$\mathbb F_{49}=\mathbb F_7(v)$, where $v^2=-1$. Choose $\zeta=1+v$,  by Magma, $\zeta$ has order $24$. And
the code $\mathcal{C}_2$ defined by (3) has generator polynomial
\[
g_7(x)
=x^8+2x^7+5x^6+5x^5+4x^4+3x^3+5x^2+4x+6
\]
and parameters $[24,16]_{7}$ with symbol-pair distance $d_p(\mathcal{C}_2)=10$.
\end{example}

\section{Pair distance in a cyclic family of length $2(q-1)$}

Let $q\ge 11$ be an odd prime power, let
$\alpha\in\mathbb F_q^*$ be primitive, and put
$
m=q-1,$ $n=2m.
$
We consider the cyclic code
$
\mathcal{C}_3=\langle g_\alpha(x)\rangle
\subseteq \mathbb F_q[x]/(x^n-1),
$
where
\begin{equation}\label{eq:generator}
\begin{aligned}
g_\alpha(x)
&=(x-1)(x-\alpha)(x-\alpha^2)(x-\alpha^3)(x^2-\alpha)(x^2-\alpha^3)(x^2-\alpha^5).
\end{aligned}
\end{equation}
Our main results establish
$
11\le d_{p}(\mathcal{C}_3)\le12.
$

\subsection{Notation and preliminary facts}

For $S\subseteq\mathbb Z_n$, put
$
N(S)=S\cup(S-1).
$
where \(S-1=\{s-1 \pmod{n}\mid s\in S\}\). Then
\begin{equation}\label{eq:pair-support}
w_{p}(\mathbf{c})=|N(\operatorname{supp}(\mathbf{c}))|.
\end{equation}
Consequently, if $\operatorname{supp}(\mathbf{c})\subseteq S$,
then
\begin{equation}\label{eq:pair-monotonicity}
w_{p}(\mathbf{c})\le |N(S)|.
\end{equation}
This monotonicity remains valid when zero coordinates
split one of the prescribed support intervals into
several nonzero runs.

\begin{lemma}\label{lem:basic}
The polynomial $g_\alpha(x)$ divides $x^n-1$ and has ten
distinct roots. Hence
\[
\dim_{\mathbb F_q}(\mathcal{C}_3)=n-10=2q-12.
\]
Moreover,
$
d_{\mathrm H}(\mathcal{C}_3)\ge8.
$
\end{lemma}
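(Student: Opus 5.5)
The plan is to realize every root of $g_\alpha(x)$ as a power of a single primitive $n$th root of unity. Then divisibility, distinctness of the roots and the BCH bound all reduce to bookkeeping with exponents modulo $n=2m$.

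\textbf{Choice of $\zeta$.} Since $\alpha$ is primitive in $\mathbb F_q^*$, it is a nonsquare in $\mathbb F_q$. Hence $x^2-\alpha$ is irreducible over $\mathbb F_q$, and it has a root $\zeta\in\mathbb F_{q^2}$ with $\zeta^2=\alpha$. Because $\zeta^2=\alpha$ has order $m$, the order of $\zeta$ is $m$ or $2m$. The order is $2m$ because
\[
\zeta^m=(\zeta^2)^{m/2}=\alpha^{m/2}=-1,
\]
using that $m=q-1$ is even and $\alpha^{m/2}$ is the unique element of order $2$ in $\mathbb F_q^*$. So $\zeta$ is a primitive $n$th root of unity, and $-1=\zeta^m$.

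\textbf{Exponents of the roots.} In terms of $\zeta$, the roots are:
\begin{itemize}
\item $1,\alpha,\alpha^2,\alpha^3$, which equal $\zeta^0,\zeta^2,\zeta^4,\zeta^6$;
\item the roots of $x^2-\alpha^k$ for $k=1,3,5$, which are $\pm\zeta^k=\zeta^k,\zeta^{k+m}$.
\end{itemize}
The exponent set is therefore
\[
Z=\{0,2,4,6\}\cup\{1,3,5,m+1,m+3,m+5\}\subseteq\mathbb Z_{2m}.
\]

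\textbf{Distinctness.} Since $m$ is even, reduction modulo $2m$ preserves parity. The first block consists of even residues and the second of odd residues, so the two blocks are disjoint. Since $q\ge11$, we have $m\ge10$, so
\[
0<2<4<6<2m \qquad\text{and}\qquad 1<3<5<m+1<m+3<m+5<2m.
\]
Thus all ten exponents are distinct modulo $n$, and $g_\alpha$ has ten distinct roots.

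\textbf{Divisibility and dimension.} All ten roots are distinct $n$th roots of unity, and $x^n-1$ is separable because $p\nmid n$. Hence $g_\alpha(x)\mid x^n-1$. The factors of $g_\alpha$ are visibly in $\mathbb F_q[x]$, so $g_\alpha(x)\in\mathbb F_q[x]$ with $\deg g_\alpha=10$. Consequently $\dim\mathcal C_3=n-10=2q-12$.

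\textbf{Hamming distance.} The defining set $Z$ contains the seven consecutive residues $0,1,2,3,4,5,6$. Lemma~\ref{lem:BCH} with $\delta=8$ then gives $d_{\mathrm H}(\mathcal C_3)\ge8$.

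\textbf{Main obstacle.} The only delicate point is identifying the square roots of $\alpha^3$ and $\alpha^5$ as $\zeta^3,\zeta^5$ (up to sign) for the same $\zeta$, and checking that this $\zeta$ has full order $2m$. Both follow from $\zeta^2=\alpha$ and $\zeta^m=-1$. Once this is in place, the remaining steps are routine exponent checks.
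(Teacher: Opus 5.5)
Your proof is correct and follows essentially the same route as the paper: take $\zeta$ (the paper's $\beta$) with $\zeta^2=\alpha$, show it has order $n$ via $\zeta^m=\alpha^{m/2}=-1$, write the roots as $1,\zeta^2,\zeta^4,\zeta^6,\pm\zeta,\pm\zeta^3,\pm\zeta^5$, and apply the BCH bound to the seven consecutive exponents $0,1,\ldots,6$. The only cosmetic difference is in the distinctness check. You compare the parity and size of the exponents modulo $2m$, whereas the paper separates the roots lying in $\mathbb F_q$ from those outside it and then compares $\alpha,\alpha^3,\alpha^5$.
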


\begin{proof}
Since $q$ is odd and $\alpha$ is primitive,
$\alpha$ is a nonsquare in $\mathbb F_q$.
Choose $\beta\in\mathbb F_{q^2}$ with
$\beta^2=\alpha$. Then
\[
\beta^m=\alpha^{m/2}=-1,
\qquad
\operatorname{ord}(\beta)=2m=n,
\qquad
\beta^q=-\beta.
\]
The roots of $g_\alpha(x)$ are
$
1,\beta^2,\beta^4,\beta^6,
\pm\beta,\ \pm\beta^3,\ \pm\beta^5.
$
The first four are distinct elements of $\mathbb F_q$.
The remaining roots lie outside $\mathbb F_q$, since
their squares are odd powers of $\alpha$ and therefore
nonsquares.
Their constant terms $\alpha,\alpha^3,\alpha^5$ are
distinct because $\operatorname{ord}(\alpha)=m\ge10$.
Thus all ten roots are distinct. They are $n$-th roots
of unity, so  $g_\alpha(x)\mid x^n-1$.
The dimension follows from $\deg (g_\alpha(x))=10$.
Finally,
$
1,\beta,\beta^2,\ldots,\beta^6
$
are seven consecutive roots of $g_\alpha(x)$.
The BCH bound gives $d_{\mathrm H}(\mathcal{C}_3)\ge8$.
\end{proof}

We perform the same parity decomposition on the codeword, writing its polynomial representative as
$
c(x)=E(x^2)+xO(x^2),
$
where
\[
E(y)=\sum_{j=0}^{m-1}c_{2j}y^j,
\qquad
O(y)=\sum_{j=0}^{m-1}c_{2j+1}y^j.
\]
The conditions at $\pm\beta,\pm\beta^3,\pm\beta^5$
are equivalent to
\begin{equation}\label{eq:parity-roots}
\begin{aligned}
E(\alpha)=E(\alpha^3)=E(\alpha^5)&=0,\\
O(\alpha)=O(\alpha^3)=O(\alpha^5)&=0.
\end{aligned}
\end{equation}

\begin{lemma}\label{lem:parity-count}
Suppose that a support consists of cyclic runs of
total even length $L$. Let $r$ be the number of
odd length runs. Then $r$ is even, and the number
of even support positions lies between
\[
\frac{L-r}{2}
\quad\text{and}\quad
\frac{L+r}{2},
\]
inclusive.
\end{lemma}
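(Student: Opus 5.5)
The plan is to count, run by run, how many positions of each parity a run contributes, and then add these contributions over all runs. Here we regard $\mathbb Z_n$ with $n$ even, so that parity of positions is well defined cyclically. Because $n$ is even, any cyclic interval consists of consecutive positions that alternate in parity, and this holds even if the interval wraps around from $n-1$ to $0$. We also assume, as in the intended use, that the support is a proper subset of $\mathbb Z_n$, so that each run is a genuine cyclic interval of length less than $n$.

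First, fix a run of length $\ell$. If $\ell$ is even, the run contains exactly $\ell/2$ even positions and $\ell/2$ odd positions. If $\ell$ is odd, the alternation shows that it contains either $(\ell+1)/2$ or $(\ell-1)/2$ even positions, depending on the parity of its first element. We can write both cases uniformly: a run contributes $\ell/2+\varepsilon/2$ even positions, where $\varepsilon=0$ for even $\ell$ and $\varepsilon\in\{1,-1\}$ for odd $\ell$.

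Second, the total $L$ is the sum of the run lengths. Reducing modulo $2$, $L$ is congruent to the number $r$ of odd-length runs. Since $L$ is even, it follows that $r$ is even.

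Third, we sum the contributions over all runs. The number of even support positions equals
\[
\frac{L}{2}+\frac12\sum_{\text{odd runs}}\varepsilon_i .
\]
Each $\varepsilon_i$ lies in $\{1,-1\}$ and there are exactly $r$ of them, so the sum $\sum\varepsilon_i$ lies between $-r$ and $r$. This gives the stated bounds $(L-r)/2$ and $(L+r)/2$.

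There is no real obstacle in this argument. The only point needing care is the wrap-around behaviour of a run, and it is handled by the evenness of $n$, which makes parity well defined and alternating modulo $n$. The same argument, or the symmetric count, gives the corresponding bounds for odd positions.
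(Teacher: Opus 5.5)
Your proposal is correct and follows essentially the same argument as the paper: balanced contributions from even-length runs, a $\pm 1$ imbalance from each odd-length run, then summing. The only differences are cosmetic. You get the evenness of $r$ directly from $L \equiv r \pmod 2$, whereas the paper uses the fact that $N_e - N_o$ is even. You also state explicitly that $n$ being even keeps parities alternating across the wrap-around, which the paper leaves implicit.
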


\begin{proof}
Let $N_e$ denote the number of even indices in the support, and $N_o$ the number of odd indices in the support. Then $N_e + N_o = L$ is even. An even length segment contributes equally many even and odd positions, so $N_e - N_o$ equals the sum of contributions from all odd-length segments. Each odd length segment contributes either $+1$ or $-1$ to this difference. Since $N_e - N_o = L - 2N_o$ is even, the number $r$ of such segments must be even.
There are $r$ such segments in total. Each odd length segment can make $N_e$ exceed $N_o$ by at most $1$, or make $N_e$ fall below $N_o$ by at most $1$. Hence $-r \le N_e - N_o \le r$. Substitute $N_o = L - N_e$; then
$
-r \le N_e - (L - N_e) \le r,
$
and we thus obtain
\[
\frac{L - r}{2} \le N_e \le \frac{L + r}{2}.
\]
\end{proof}

\begin{lemma}
\label{lem:no-trinomial}
Let
$
F(y)=\sum_{i=1}^3 f_i y^{u_i},
$
where $f_i\in\mathbb F_q^*$ and the exponents $u_i$
are distinct modulo $m$. Then $F$ cannot vanish
simultaneously at $\alpha,\alpha^3,\alpha^5$.
\end{lemma}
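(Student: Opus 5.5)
The plan is to pass to the residues $x_i=\alpha^{u_i}\in\mathbb F_q^*$. Since the $u_i$ are distinct modulo $m$ and $\alpha$ has order $m$, the three elements $x_1,x_2,x_3$ are pairwise distinct. Vanishing of $F$ at $\alpha^{k}$ for $k\in\{1,3,5\}$ reads
\[
\sum_{i=1}^3 f_i x_i^{k}=0,\qquad k=1,3,5 .
\]
This is a homogeneous linear system in $(f_1,f_2,f_3)$ with coefficient matrix $V=(x_i^{k})_{k\in\{1,3,5\},\,i}$. A nonzero solution with all $f_i\ne0$ requires $\det V=0$, so the whole task is to show that this determinant does not vanish.

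To compute $\det V$, factor $x_i$ out of each column. This leaves the Vandermonde matrix in the squares $z_i=x_i^2$, with rows $1,z_i,z_i^2$. Hence
\[
\det V=x_1x_2x_3\prod_{i<j}(z_j-z_i)=x_1x_2x_3\prod_{i<j}(x_j-x_i)(x_j+x_i).
\]
The factor $x_1x_2x_3$ and the differences $x_j-x_i$ are nonzero. So $\det V=0$ forces $x_j=-x_i$ for some $i\ne j$.

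The main obstacle is excluding this case, and this is exactly where the hypothesis bites. Here $-1=\alpha^{m/2}$, so $x_j=-x_i$ means $u_j\equiv u_i+m/2\pmod m$. This is not excluded by distinctness modulo $m$. Indeed, with two terms one can have $F(y)=y^{u}+y^{u+m/2}$, which satisfies $F(\alpha^k)=\alpha^{ku}(1+(-1)^k)$ and so vanishes at all odd powers. Therefore the Vandermonde argument alone does not suffice, and the antipodal case must be handled separately. Say $x_2=-x_1$. Odd powers then give $x_2^k=-x_1^k$, and the system becomes
\[
(f_1-f_2)x_1^{k}+f_3x_3^{k}=0,\qquad k=1,3,5 .
\]

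I would finish as follows. Take $k=1$ and $k=3$ and eliminate. If $f_1\ne f_2$, this gives $x_1^2=x_3^2$, so $x_3=\pm x_1$, contradicting distinctness of $x_3$ from $x_1$ and $x_2=-x_1$. If $f_1=f_2$, the equations reduce to $f_3x_3^k=0$, which forces $f_3=0$, a contradiction.

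If the intended reading allows arbitrary nonzero $f_i$ with $u_i$ distinct only modulo $m$, the case $f_1=f_2$ cannot occur, since it would need $f_3=0$. So every case yields a contradiction, and the lemma follows. This should be stated as a two-case split after the Vandermonde factorization.
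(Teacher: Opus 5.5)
Your proof is correct and is essentially the paper's argument. Both reduce the three equations to a Vandermonde system in the squares $z_i=\alpha^{2u_i}$ (the paper absorbs $\alpha^{u_i}$ into the unknowns $b_i=f_i\alpha^{u_i}$, while you factor it out of the determinant), and then treat separately the case where two squares coincide, i.e.\ your antipodal case $x_j=-x_i$. In that case the paper uses the independence of the two remaining distinct Vandermonde columns to conclude that the coefficient of the unpaired column vanishes, which is exactly what your elimination ($f_1\ne f_2$ forcing $x_3=\pm x_1$, and $f_1=f_2$ forcing $f_3=0$) establishes by hand.
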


\begin{proof}We proceed by contradiction. Assume that there exists a non zero polynomial \(F(y)\) such that
$F(\alpha)=F(\alpha^3)=F(\alpha^5)=0.$
Set
$
z_i=\alpha^{2u_i},
b_i=f_i\alpha^{u_i},
$ since $f_i\in\mathbb F_q^*$ and $\alpha\neq0$, all $b_i\neq0$.
The three vanishing equations give

\[
\begin{pmatrix}
1&1&1\\
z_1&z_2&z_3\\
z_1^2&z_2^2&z_3^2
\end{pmatrix}
\begin{pmatrix}b_1\\b_2\\b_3\end{pmatrix}=0.
\]
If the \(z_i\) are distinct, the matrix is invertible, the system of equations has only the trivial solution.
If exactly two of the \(z_i\) coincide, the coefficient of
the remaining column must be zero. Both cases
contradict $b_1b_2b_3\ne0$.
All three  \(z_i\) cannot coincide because each residue class
modulo $m/2$ contains only two residue classes modulo
$m$, whereas the $u_i$ are distinct modulo $m$.
\end{proof}

\subsection{Frobenius constraints on weight-eight supports}

\begin{lemma}\label{lem:frobenius-eight}
Let $\mathbf{c}\in \mathcal{C}_3$ have Hamming weight eight, with
four support positions of each parity. Define
\[
\mathcal E=\{\beta^s:s\in\operatorname{supp}(\mathbf{c}),\
s\text{ even}\},
\]
\[
\mathcal O=\{\beta^s:s\in\operatorname{supp}(\mathbf{c}),\
s\text{ odd}\},
\]
and
\[
A(z)=\prod_{e\in\mathcal E}(z-e)
=z^4-e_1z^3+e_2z^2-e_3z+e_4,
\]
\[
B(z)=\prod_{o\in\mathcal O}(z-o)
=z^4-o_1z^3+o_2z^2-o_3z+o_4.
\]
Then
\begin{equation}\label{eq:symmetric-constraints}
e_2=o_2,\qquad
e_4=o_4,\qquad
e_1o_3=e_3o_1.
\end{equation}
\end{lemma}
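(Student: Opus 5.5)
The plan is to turn the vanishing conditions on $\mathbf c$ into linear conditions on two unknown scalars, one per parity class, and then read off (\ref{eq:symmetric-constraints}) as the compatibility conditions of the resulting over-determined system. Write the even support positions as $s_1,\dots,s_4$, set $w_i=\beta^{s_i}$ (so $\mathcal E=\{w_i\}$), and write the odd ones as $t_1,\dots,t_4$ with $v_i=\beta^{t_i}$ (so $\mathcal O=\{v_i\}$). The roots $\pm\beta^{2l+1}$ ($l=0,1,2$) are exactly the conditions (\ref{eq:parity-roots}). After multiplying through by the appropriate power, they say
\[
\sum_{i=1}^4 (c_{s_i}w_i)\,(w_i^2)^{l}=0,\qquad \sum_{i=1}^4 (c_{t_i}v_i)\,(v_i^2)^{l}=0,\qquad l=0,1,2 .
\]

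First I will check that the nodes $w_i^2=\alpha^{s_i}$ are pairwise distinct, and likewise the $v_i^2$. If two coincided, the $3\times4$ Vandermonde system would reduce to at most three distinct nodes. Arguing exactly as in Lemma~\ref{lem:no-trinomial}, this would force a vanishing coefficient, contradicting the assumption that all eight coordinates are nonzero. With distinct nodes the kernel is one-dimensional and is given by the standard partial-fraction vector. Hence there are $\lambda,\mu\in\mathbb F_{q^2}^*$ with
\[
c_{s_i}w_i=\frac{\lambda}{\prod_{l\ne i}(w_i^2-w_l^2)},\qquad
c_{t_i}v_i=\frac{\mu}{\prod_{l\ne i}(v_i^2-v_l^2)} .
\]

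Next I impose the remaining roots $\beta^{2k}$, $k=0,1,2,3$. Since $c(\beta^{2k})=\sum_s c_s(\beta^s)^{2k}$, they read
\[
\lambda\,\Phi_k(\mathcal E)+\mu\,\Phi_k(\mathcal O)=0,\qquad
\Phi_k(W):=\sum_{i}\frac{w_i^{2k-1}}{\prod_{l\ne i}(w_i^2-w_l^2)} .
\]
Because $(\lambda,\mu)\neq(0,0)$, the $4\times2$ matrix $(\Phi_k(\mathcal E),\Phi_k(\mathcal O))_{k=0}^{3}$ has rank at most one. Equivalently, all $2\times2$ minors vanish, so $\Phi_k(\mathcal E)\Phi_j(\mathcal O)=\Phi_j(\mathcal E)\Phi_k(\mathcal O)$ for all $j,k$.

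The main work, and the step I expect to be the real obstacle, is to evaluate $\Phi_k$ in closed form via the elementary symmetric functions of $W$. I will use residues of the rational function
\[
R_k(z)=\frac{z^{2k}}{z\prod_l (z^2-w_l^2)}
\]
on $\mathbb P^1$. Its poles at $\pm w_i$ contribute $\pm w_i^{2k-1}/\bigl(2\prod_{l\ne i}(w_i^2-w_l^2)\bigr)$, which combine to $\Phi_k$. The remaining contributions come from the poles at $0$ and $\infty$. These give:
\begin{itemize}
\item $\Phi_1=\Phi_2=0$, from the degree count (exponents $1$ and $3$ are below the third-order divided difference);
\item $\Phi_0=\pm 1/e_4$, from the pole at $0$;
\item $\Phi_3=\pm1$, from the pole at $\infty$.
\end{itemize}
If $\Phi_1=\Phi_2=0$ identically, the minors carry no information. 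In that case I will instead work with the unnormalised polynomials $E$ and $O$ and factor $\prod_{l}(z^2-w_l^2)=A(z)A(-z)$. Here $A(z)A(-z)=(z^4+e_2z^2+e_4)^2-z^2(e_1z^2+e_3)^2$, so its even and odd parts produce the quantities $e_2$, $e_4$ and the ratio $e_1:e_3$.

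I expect the correct bookkeeping to be as follows. The conditions at $\beta^{2k}$ combined with the three odd-root equations force $A(z)A(-z)$ and $B(z)B(-z)$ to agree modulo the relevant relations, namely the vanishing of $\prod(z^2-w^2)-\prod(z^2-v^2)$ at enough nodes. Comparing coefficients of $z^6,z^4,z^2,z^0$ then yields $e_2=o_2$ and $e_4=o_4$, with the ratio condition giving $e_1o_3=e_3o_1$. I would first try the explicit residue computation for $k=0,3$ together with the normalisation of $\lambda,\mu$ coming from the odd-root equations, and fall back on the coefficient comparison of $A(z)A(-z)$ versus $B(z)B(-z)$ if the minors degenerate.
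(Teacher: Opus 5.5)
\textbf{There are two genuine gaps: the closed forms for $\Phi_k$ are wrong, and the distinctness of the squared nodes is not established.}

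\textbf{First gap: the evaluation of $\Phi_k$.} You flag this as the obstacle, and it is carried out incorrectly. What follows it describes what you expect rather than proving anything.
\begin{itemize}
\item Your $R_k(z)=z^{2k-1}/\prod_l(z^2-w_l^2)$ is odd, so its residues at $w_i$ and $-w_i$ are \emph{equal}, each $w_i^{2k-2}/\bigl(2\prod_{l\ne i}(w_i^2-w_l^2)\bigr)$. The finite residues therefore compute $\sum_i x_i^{k-1}/\prod_{l\ne i}(x_i-x_l)$ with $x_i=w_i^2$, not $\Phi_k$.
\item For the same reason the ``degree count'' fails: $w_i^{2k-1}$ is not a polynomial in the node $x_i$.
\item Using $\prod_{l\ne i}(w_i+w_l)=e_1w_i^2+e_3$, one finds instead $\Phi_0=(e_1e_2-e_3)/(e_4D)$, $\Phi_1=-e_1/D$, $\Phi_2=e_3/D$, $\Phi_3=(e_1e_4-e_2e_3)/D$, where $D=e_3^2-e_1e_2e_3+e_1^2e_4=-\prod_{i<j}(w_i+w_j)$. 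For instance, $w=(1,2,3,4)$ gives $\Phi_1=1/1260\ne0$.
\item With the values you predict, the only nontrivial minor gives $e_4=\pm o_4$. The identities $e_2=o_2$ and $e_1o_3=e_3o_1$ are never derived.
\end{itemize}

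\textbf{Second gap: the nodes $w_i^2=\alpha^{s_i}$ need not be distinct.} Since $m=q-1$ is even, $s$ and $s+m$ are both even and $\beta^{s+m}=-\beta^s$. The argument of Lemma~\ref{lem:no-trinomial} excludes one coincidence but not two. For example, $E(y)=(1+y^{m/2})(a+by^k)$ vanishes at $\alpha,\alpha^3,\alpha^5$ for all $a,b$. In that case the even kernel is two-dimensional, $D=0$, and your one-parameter description of $(c_{s_i})$ breaks down.

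\textbf{How to repair your route.} With the corrected $\Phi_k$, your rank-one condition does give the lemma when $D_{\mathcal E}D_{\mathcal O}\ne0$. It is simpler to avoid partial fractions altogether.
\begin{itemize}
\item Put $a_j=\sum_i c_{s_i}w_i^j$ and $b_j=\sum_i c_{t_i}v_i^j$. The ten roots give $a_j=b_j=0$ for $j=1,3,5$, and $a_{2k}=-b_{2k}$ for $0\le k\le3$.
\item Since $A(w_i)=0$, we have $a_{j+4}-e_1a_{j+3}+e_2a_{j+2}-e_3a_{j+1}+e_4a_j=0$, and similarly for $b$ with $B$.
\item Taking $j=0,1,2$ and subtracting yields $(e_2-o_2)a_2+(e_4-o_4)a_0=0$, $(e_2-o_2)a_4+(e_4-o_4)a_2=0$, and $e_1a_4+e_3a_2=0=o_1a_4+o_3a_2$.
\item Nondegeneracy comes from the Hankel matrix $(a_{i+j})_{0\le i,j\le3}=V\,\mathrm{diag}(c_{s_i})\,V^{T}$ with $V=(w_l^{i})$. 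It is invertible because the $w_i$ themselves are distinct. After reordering indices by parity, its determinant is $(a_0a_4-a_2^2)(a_2a_6-a_4^2)\neq0$, which gives all three relations.
\end{itemize}
This route uses all ten roots and needs no Frobenius argument.

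\textbf{Comparison with the paper.} The paper instead applies the seven consecutive roots $\beta^0,\dots,\beta^6$ to all eight (always distinct) nodes $\beta^s$, obtaining $c_s=\lambda/H'(\beta^s)$. It then turns $c_s\in\mathbb F_q$ and $\beta^q=-\beta$ into the polynomial identities $B(-z)-KB(z)=(1-K)A(z)$ and $A(-z)+KA(z)=(1+K)B(z)$. The three relations are read off from these by comparing coefficients.
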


\begin{proof}
Put $H(z)=A(z)B(z)$. The codeword satisfies the root condition, so we have
\[
c(\beta^j)=0,\qquad 0\le j\le6,
\]
which forms  a $7\times8$ Vandermonde system with a
1-dimensional kernel. The standard interpolation
identity gives a kernel generator
$
\left(H'(\beta^s)^{-1}\right)_
{s\in\operatorname{supp}(c)}.
$
Thus, for some $\lambda\in\mathbb F_{q^2}^*$,
\[
c_s=\frac{\lambda}{H'(\beta^s)}.
\]
Since $c_s\in\mathbb F_q^*$, so
$
H'(\beta^s)^q=K H'(\beta^s),$ where $
K=\lambda^q/\lambda.
$

For each root $e\in E$ of $A(z)$, we have $e^q = e$. and for each root $o\in O$ of $B(z)$, we have $o^q = -o$.
Evaluate the identity $\big(H'(\gamma)\big)^q = K H'(\gamma)$ at the roots of $A(z)$ and $B(z)$.
For any $e\in E$, $H'(e)=A'(e)B(e)$. Since $A'(e)\neq 0$ and $A'(e)^q=A'(e)$, we obtain
$
B(e)^q = K B(e).
$
Using $B(e)^q = B(-e)$, we get $B(-e)-K B(e)=0$ for all four roots $e$ of $A(z)$.
Hence $B(-z)-K B(z)$ is divisible by $A(z)$. Comparing leading coefficients,
\[
B(-z)-K B(z) = (1-K)A(z).
\]
By the same argument, we get
$
A(-z)+K A(z) = (1+K)B(z).
$
The coefficients of $z^2$ and the constant terms
give $e_2=o_2$ and $e_4=o_4$, since the characteristic
is odd. The coefficients of $z^3$ and $z$ give
\[
(1-K)e_j+(1+K)o_j=0,\qquad j=1,3.
\]
Because $(1-K,1+K)\ne(0,0)$, the vectors
$(e_1,o_1)$ and $(e_3,o_3)$ are linearly dependent.
This proves the last identity.
\end{proof}

\subsection{The universal distance interval}

 \begin{theorem}\label{thm:interval}
Let $q\ge11$. Then
$\dim (\mathcal{C}_3)=2q-12,
$ and
$11\le d_{p}(\mathcal{C}_3)\le12.
$
\end{theorem}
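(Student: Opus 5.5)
The dimension is already given by Lemma~\ref{lem:basic}. The upper bound $d_p(\mathcal{C}_3)\le n-(n-10)+2=12$ is Lemma~\ref{lem:singleton}. All the work goes into the lower bound $d_p\ge 11$. Let $\mathbf c\ne 0$ be a codeword. Since $n\ge 20$ and $d_H\ge 8$, I would suppose $w_p(\mathbf c)\le 10$, so the support is not all of $\mathbb Z_n$ and $w_p=w_H+\rho$. A single run of length $\le 10$ is impossible, because the shifted polynomial would have degree $\le 9<\deg g_\alpha=10$; this is the argument of Lemma~\ref{lem:single-run}. Hence $\rho\ge 2$, which forces $w_H=8$ and $\rho=2$.

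Next I would record the parity constraint. By \eqref{eq:parity-roots}, each of $E$ and $O$ vanishes at $\alpha,\alpha^3,\alpha^5$. So each is either zero or has at least four terms: a binomial is killed by the $3\times 2$ Vandermonde-type argument, and a trinomial by Lemma~\ref{lem:no-trinomial}. Exponents lie in $\{0,\dots,m-1\}$, so they are distinct modulo $m$. If one of $E,O$ vanishes, the support lies in one parity class and no two support positions are adjacent, giving $w_p=16$. Otherwise, with $w_H=8$, each parity class has exactly four positions.

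The two runs have lengths $7+1$, $6+2$, $5+3$ or $4+4$. By Lemma~\ref{lem:parity-count}, the $r$ odd runs must give $N_e=4$. For $6+2$ and $4+4$ this is automatic. For $7+1$ and $5+3$, the parity of the single odd-length short run must be chosen so that the counts balance. After a cyclic shift (and possibly multiplying by $x$, which is allowed since the code is cyclic), the support is $\{0,\dots,a-1\}\cup\{\ell,\dots,\ell+b-1\}$ with $a+b=8$. In each configuration I would apply Lemma~\ref{lem:frobenius-eight}: compute $A(z)$ and $B(z)$ from the points $\beta^s$, and impose $e_2=o_2$, $e_4=o_4$, $e_1o_3=e_3o_1$.

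For example, in the $4+4$ case with $\ell$ even, the even points are $\{1,\beta^2,\beta^\ell,\beta^{\ell+2}\}$ and the odd points are $\{\beta,\beta^3,\beta^{\ell+1},\beta^{\ell+3}\}=\beta\cdot(\text{even set})$. Then $e_4=o_4$ gives $\beta^4=1$, which is impossible. In general the odd set is $\beta$ times a shifted copy of part of the even set, so $e_4=o_4$ becomes a single equation $\beta^{N}=1$ or $\beta^N=\pm\alpha^{j}$. This forces $N\equiv 0\pmod n$, or pins down $\ell$ to at most a couple of values, such as $\ell\equiv m$ or $\ell\equiv$ a small constant modulo $m$. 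The surviving values of $\ell$ are then tested against $e_2=o_2$ and $e_1o_3=e_3o_1$. Using $\beta^m=-1$ and $\beta^q=-\beta$, these become polynomial identities in $\beta$ of small degree. They should contradict $\operatorname{ord}(\beta)=n\ge 20$, which is why the hypothesis $q\ge 11$ is needed.

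I expect the main obstacle to be these residual cases, where $\ell$ sits near $m$ and the support is nearly symmetric under $s\mapsto s+m$. There the factor $1-y^{m/2}$-type structure may make $e_4=o_4$ hold identically. If the symmetric-function constraints do not close such a case, I would fall back on the direct approach used for $\mathcal{C}_2$ in the $3+3$ pattern. That means writing $E$ and $O$ explicitly with a common factor, solving the $\mathbb F_q$-linear conditions at $\alpha,\alpha^3,\alpha^5$, and using $c(\beta^0)=c(\beta^2)=c(\beta^4)=c(\beta^6)=0$ together with the $\{1,\beta\}$-basis of $\mathbb F_{q^2}$ to force $\alpha$ to have order at most $10$, which contradicts $q\ge 11$. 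Once every pattern is excluded, $w_p(\mathbf c)\ge 11$ for all nonzero $\mathbf c$, which proves the theorem.
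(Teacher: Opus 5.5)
Your route is the paper's: exclude a single run by degree, reduce to $(w_{\mathrm H},\rho)=(8,2)$, force a $(4,4)$ parity split, and apply Lemma~\ref{lem:frobenius-eight}, using $e_4=o_4$ first and $e_2=o_2$ on the survivors. There are two problems.

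First, one step is false as stated: a binomial \emph{can} vanish at $\alpha,\alpha^3,\alpha^5$. In the $3\times2$ system the two columns coincide when $\alpha^{2r}=\alpha^{2s}$, i.e.\ $s=r+m/2$. Then $b_1+b_2=0$ has nonzero solutions. Concretely, $y^r(1+y^{m/2})$ vanishes at every odd power of $\alpha$, since $\alpha^{m/2}=-1$. So the claim that each of $E,O$ is zero or has at least four terms is wrong, and your derivation of the $(4,4)$ split does not stand as written. The split is still correct, for the reason the paper gives. Two runs of total length $8$ have at most two odd runs, so Lemma~\ref{lem:parity-count} yields $3\le N_e\le5$. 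Hence only a trinomial component must be excluded, which is Lemma~\ref{lem:no-trinomial}.

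Second, the case analysis, which is the real content of the lower bound, is left as a forecast, and the forecast misplaces the difficulty. The condition $e_4=o_4$ is exactly $\beta^{\Delta}=1$, where $\Delta=\sum_{i\in S,\,i\ \mathrm{even}}i-\sum_{i\in S,\,i\ \mathrm{odd}}i$. Write $S=\{0,\dots,a-1\}\cup\{t,\dots,t+7-a\}$ with $a\le4$. The configurations compatible with the $(4,4)$ split, and their values of $\Delta$, are:
$a=1$, $t$ odd: $\Delta=-(t+3)$;
$a=2$, $t$ even: $\Delta=-4$;
$a=2$, $t$ odd: $\Delta=2$;
$a=3$, $t$ odd: $\Delta=-(t+1)$;
$a=4$, $t$ even: $\Delta=-4$;
$a=4$, $t$ odd: $\Delta=0$.
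The nonempty-gap conditions give $0<|\Delta|<n$ in every case but the last.

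So no special values of $t$ near $m$ survive. What survives is the whole family $a=4$, $t=2s+1$, on which $e_4=o_4$ holds identically. There $\mathcal E=\{1,\alpha,\alpha^{s+1},\alpha^{s+2}\}$ and $\mathcal O=\{\beta,\beta\alpha,\beta\alpha^{s},\beta\alpha^{s+1}\}$. This gives $e_2-o_2=\alpha(\alpha-1)(\alpha^{2s+1}-1)$, so $m\mid 2s+1$, which is impossible because $m$ is even. That finishes the proof, and your fallback elimination is not needed. The contradiction comes from the parity of $m$, not from $q\ge11$. Also, a conclusion like $\operatorname{ord}(\alpha)\le10$ would not contradict $q=11$ anyway.
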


\begin{proof}
The dimension follows from Lemma~\ref{lem:basic}.
The Singleton bound for symbol-pair codes gives
\[
d_{p}(\mathcal{C}_3)
\le n-\dim (\mathcal{C}_3)+2=12.
\]

Suppose that a nonzero word satisfies
$w_{p}(\mathbf{c})\le10$. Since $n\ge20$, its support
is proper. By $d_{\mathrm H}(\mathcal{C}_3)\ge8$ and
$wt_p(\mathbf{c})=w_H{(\mathbf{c})}+\rho(\mathbf{c})
$, we have
$
(w_{\mathrm H}(\mathbf{c}),\rho(\mathbf{c}))
\in\{(8,1),(8,2),(9,1)\}.
$
A one-run word of length at most nine can be shifted
to a nonzero polynomial of degree at most eight.
Such a polynomial cannot be divisible by
$g_\alpha$, which has degree ten.

It remains to exclude $(8,2)$. By
Lemma~\ref{lem:parity-count}, the parity counts are
$(3,5)$, $(4,4)$, or $(5,3)$. The first and last
possibilities contradict Lemma~\ref{lem:no-trinomial}.
Thus the split is $(4,4)$.
After a cyclic shift, write the support as
\[
S=\{0,\ldots,a-1\}
\cup\{t,\ldots,t+7-a\},
\qquad 1\le a\le4,
\]
with both cyclic gaps nonempty.
From the conclusion of Lemma ~\ref{lem:frobenius-eight}, \(e_4=o_4\). Recall that
\[e_4=\prod_{s\in S_{\text{even}}}\beta^s=\beta^{\sum_{s\in S_{\text{even}}}s},\qquad
o_4=\prod_{s\in S_{\text{odd}}}\beta^s=\beta^{\sum_{s\in S_{\text{odd}}}s}.\]
Equating them yields
\(\beta^{\sum_{s\in S_{\text{even}}}s-\sum_{s\in S_{\text{odd}}}s}=1.\)
Since \(\beta\) has order n, we obtain \(n\mid \Delta\), where
\(\Delta=\sum_{s\in S_{\text{even}}}s-\sum_{s\in S_{\text{odd}}}s.\)

The cases compatible with the parity split $(4,4)$
are
\[
\begin{array}{c|c|c}
a&\text{parity of }t&\Delta\\ \hline
1&\text{odd}&-(t+3)\\
2&\text{even}&-4\\
2&\text{odd}&2\\
3&\text{odd}&-(t+1)\\
4&\text{even}&-4\\
4&\text{odd}&0
\end{array}
\]
The nonempty-gap conditions imply
$0<|\Delta|<n$ in every row except the last.
Hence $a=4$ and $t=2s+1$.

For this remaining case,
\[
\mathcal E=\{1,\alpha,\alpha^{s+1},\alpha^{s+2}\},
\]
\[
\mathcal O
=\{\beta,\beta\alpha,\beta\alpha^s,\beta\alpha^{s+1}\}.
\]
By Lemma ~\ref{lem:frobenius-eight}, expanding $e_2-o_2$ gives
\[
e_2-o_2
=\alpha(\alpha-1)(\alpha^{2s+1}-1)=0.
\]
Thus $m\mid2s+1$, which is impossible because $m$
is even. This excludes every pair weight at most
ten.
\end{proof}

We now give two examples whose results are consistent
with Theorem~24.

\begin{example}
Let $q=19$ and choose $\alpha=3\in\mathbb F_{19}^{*}$,
which has order $18$.
The code $\mathcal{C}_3$ has generator polynomial
\[
\begin{aligned}
g_\alpha(x)={}&x^{10}+17x^9+3x^8+17x^7+12x^6+18x^4+14x^3+3x^2+3x+7
\end{aligned}
\]
and parameters $[36,26]_{19}$ with symbol-pair distance
$d_p(\mathcal{C}_3)=11$.
\end{example}

\begin{example}
Let $q=11$ and choose $\alpha=2\in\mathbb F_{11}^{*}$,
which has order $10$.
The code $\mathcal{C}_3$ has generator polynomial
\[
\begin{aligned}
g_\alpha(x)={}&x^{10}+7x^9+6x^8+4x^7+x^6+3x^4+8x^3+8x^2+5x+1
\end{aligned}
\]
and parameters $[20,10]_{11}$ with symbol-pair distance
$d_p(\mathcal{C}_3)=12$.
\end{example}

\noindent\textbf{Acknowledgments.} DeepSeek was used for improving the presentation of the paper.


\begin{thebibliography}{99}
\bibitem{1} Y. Cassuto and M. Blaum, Codes for Symbol-Pair Read Channels. IEEE Trans. Inf. Theory, 57(12): 8011-8020 (2011).



\bibitem{9} Y. M. Chee, L. Ji, H. M. Kiah, et al., Maximum Distance Separable Codes for Symbol-Pair Read Channels. IEEE Trans. Inf. Theory, 59(11): 7259-7267 (2013).





\bibitem{16} B. C. Chen, L. Lin, and H. Liu, Constacyclic Symbol-Pair Codes: Lower Bounds and Optimal Constructions. IEEE Trans. Inf. Theory, 63(12):7661-7666 (2017).

\bibitem{17} B. Ding, T. Zhang, and G. Ge, Maximum Distance Separable Codes for $b$-Symbol Read Channels. Finite Fields Appl., 49: 180-197 (2018).

\bibitem{13} H. Q. Dinh, B. T. Nguyen, A. K. Singh, et al., On the Symbol-Pair Distance of Repeated-Root Constacyclic Codes of Prime Power Lengths. IEEE Trans. Inf. Theory, 64(4): 2417-2430 (2017).


\bibitem{2} H. Q. Dinh, B. T. Nguyen, and S. Sriboonchitta, MDS Symbol-Pair Cyclic Codes of Length $2p^s$ over $\mathbb{F}_{p^m}$. IEEE Trans. Inf. Theory, 66(1): 240-262 (2019).



\bibitem{7} H. Q. Dinh, B. T. Nguyen, A. K. Singh, and S. Sriboonchitta, On Symbol-Pair Distances of Repeated-Root Constacyclic Codes of Length $2p^s$. Discrete Math., 342(11): 3062-3078 (2019).


    \bibitem{10} B. Ding, G. Ge, J. Zhang, et al., New Constructions of MDS Symbol-Pair Codes. Designs, Codes Cryptogr., 86(4): 841-859 (2018).


    \bibitem{11} X. S. Kai, S. X. Zhu, and P. Li, A Construction of New MDS Symbol-Pair Codes. IEEE Trans. Inf. Theory,  61(11): 5828-5834 (2015).


\bibitem{19} X. Kai, S. Zhu, Y. Zhao, et al., New MDS Symbol-Pair Codes From Repeated-Root Codes. IEEE Commun Letters, 22(3): 462-465 (2018).

\bibitem{22} X. Kai, Y. Zhou, and S. Zhu, Two New Classes of MDS Symbol-Pair Codes. IEEE Trans. Inf. Theory,  70(11):7701-7710 (2024).




\bibitem{20} G. Luo, M. F. Ezerman, S. Ling, and X. Pan, New Families of MDS Symbol-Pair Codes from Matrix-Product Codes. IEEE Trans. Inf. Theory, 69(3):1567-1587 (2023).

\bibitem{MS78} F. J. MacWilliams and N. J. A. Sloane, The Theory of Error-Correcting Codes. North-Holland Mathematical Library, Vol. 16, Amsterdam: North-Holland, (1977).


    \bibitem{15} J. Ma and J. Luo, MDS Symbol-Pair Codes from Repeated-Root Cyclic Codes. Designs, Codes Cryptogr., 90(1): 121-137 (2022).



\bibitem{21} J. Ma and J. Luo, Constructions of MDS Symbol-Pair Codes with Minimum Distance Seven or Eight. Designs, Codes Cryptogr., 90(10): 2337-2359 (2022).

\bibitem{25}J. Ma and J. Luo, AMDS symbol-pair codes from repeated-root cyclic codes. Discrete Math. 346, no. 7, Art. no. 113391 (2023).


\bibitem{23} R. Qiu and W. Fang, New Constructions of MDS Symbol-Pair Codes via Simple-Root Cyclic Codes. Designs, Codes Cryptogr., 93(11): 4857-4886 (2025).


\bibitem{26} X. Tang and R. Luo, MDS and AMDS symbol-pair codes constructed from repeated-root cyclic codes. Finite Fields Appl., 89, Art. no. 102203 (2023).








\end{thebibliography}
\end{document}